\definecolor{light-gray}{gray}{0.9}
	\newtheorem{theorem}{Theorem}%
	\newtheorem{corollary}{Corollary}%
	\newtheorem{example}{Example}
        	\newtheorem{fact}{Fact}
		\newtheorem{claim}{Claim}
	\newcommand\eat[1]{}
	\newlength{\wordlength}
	\newcommand{\eqclass}[2][]{\ifthenelse{\equal{#1}{}}{[#2]}{[#2]_{\sim_{#1}}}}
	\newcommand{\Pref}[1][]{
		\ifthenelse{\equal{#1}{}}{\mathrel R}{\mathop{R_{#1}}}
	}                                          
	\newcommand{\sPref}[1][]{                  
		\ifthenelse{\equal{#1}{}}{\mathrel P}{\mathop{P_{#1}}}
	}                                          
	\newcommand{\Indiff}[1][]{                 
		\ifthenelse{\equal{#1}{}}{\mathrel I}{\mathop{I_{#1}}}
	}
	\newcommand{\prefset}[1][]{\ifthenelse{\equal{#1}{}}{\mathcal{R}}{\mathcal{R}_{#1}}}
\newcommand{\nbh}[1][]{
	\ifthenelse{\equal{#1}{}}{\nu}{\nu(#1)}
}
\newcommand{\cstr}[1][]{
	\ifthenelse{\equal{#1}{}}{\mathscr S}{\cstr(#1)}
}
\newcommand{\choice}[1][]{
	\ifthenelse{\equal{#1}{}}{\mathit{C}}{\choice(#1)}
}
\newcommand{\pbDef}[3]{%
\noindent
\begin{center}
\begin{boxedminipage}{0.98 \columnwidth}
#1\\[5pt]
\begin{tabular}{l p{0.75 \columnwidth}}
Input: & #2\\
Question: & #3
\end{tabular}
\end{boxedminipage}
\end{center}
}
\tikzset{
  treenode/.style = {align=center, inner sep=0pt, text centered,
    font=\sffamily},
  arn_n/.style = {treenode, circle, white, font=\sffamily\bfseries, draw=black,
    fill=black, text width=1.5em},
  arn_r/.style = {treenode, circle, red, draw=red, 
    text width=1.5em, very thick},
  arn_x/.style = {treenode, rectangle, draw=black,
    minimum width=0.5em, minimum height=0.5em}
}
\begin{document}

\title{Pareto Optimal Allocation under Uncertain Preferences}
	\author{Haris Aziz} \ead{haris.aziz@data61.csiro.au}
\address{Data61, CSIRO and UNSW Australia}
\author{Ronald de Haan} \ead{dehaan@ac.tuwien.ac.at}
\address{Technische Universit\"{a}t Wien, Vienna, Austria.}
\author{Baharak Rastegari} \ead{baharak.rastegari@glasgow.ac.uk}
\address{School of Computing Science, University of Glasgow, Glasgow, UK.}




\begin{abstract}
The assignment problem is one of the most well-studied settings in social choice, matching, and discrete allocation. We consider the problem with the additional feature that agents' preferences involve uncertainty. 
The setting with uncertainty leads to a number of interesting questions including the following ones. How to compute an assignment with the highest probability of being Pareto optimal? What is the complexity of computing the probability that a given assignment is Pareto optimal? 
Does there exist an assignment that is Pareto optimal with probability one?
We consider these problems under two natural  uncertainty models: (1) the lottery model in which each agent has an independent probability distribution over linear orders and (2) the joint probability model that involves a joint probability distribution over preference profiles. For both of the models, we present a number of algorithmic and complexity results highlighting the difference and similarities in the complexity of the two models. 
\end{abstract}

	\begin{keyword}
	 	Assignment Problem, Resource allocation, Pareto optimality, Uncertain Preferences.
		
		\emph{JEL}: C62, C63, and C78
	\end{keyword}

\maketitle

\section{Introduction}

When preferences of agents are aggregated to 
identify a desirable social outcome, Pareto optimality is a minimal requirement. Pareto optimality stipulates that there should not be another outcome that is at least as good for all agents and better for at least one agent. We take Pareto optimality as a central concern and  
consider a richer version of the classic assignment problem where the twist is that agents may express uncertainty in their preferences. The assignment problem is a fundamental setting in which 
$n$ agents express preferences over $n$ items and each agent is to be allocated one item. The setting is a classical one in discrete allocation. Its axiomatic and computational aspects have been well-studied~\citep{AbSo99a,ABL+16a,ACMM05a,AMXY15a,BoMo01a,Gard73b, Sven94a,Sven99a}. Our motivation for studying assignment with uncertain preferences is that agents' preferences may not be completely known because of a lack of information or communication.

Our work is inspired by the recent work of \citet{ABG+16a} who examined the stable marriage problem under uncertain preferences. Uncertainty in preferences  has already been studied in voting~\citep{HAK+12a}. Similarly, in auction theory, it is standard to examine Bayesian settings in which there is probability distribution over the types of the agents. Although computational aspects of Pareto optimal outcomes have been intensely studied in various settings such as assignment, matching, housing markets, and committee voting~\citep{ACMM05a,ALL16a,AMXY15a,AzKe12a, ErEr15a,KMRZ+14a,Manl13a,SeSa13a}, there has not been much work on Pareto optimal under uncertain preferences. When agents have uncertain preferences, one can relax the goal of computing a Pareto optimal outcome and focus on computing outcomes that have the \emph{highest probability} of being Pareto optimal. 
We will abbreviate Pareto optimal as PO. 
If an assignment is Pareto optimal with probability one, we will call it certainly PO. 



    We consider the following uncertainty models:
 
 \begin{itemize}
 \item \textbf{Lottery Model}: For each agent, we are given a probability distribution over linear preferences.
 \item  \textbf{Joint Probability Model}: A probability distribution over linear preference profiles is specified.
 \end{itemize}

 
 Note that both the lottery model and the joint probability model representation can be exponential in the number of agents but if the support of the probability distributions is small, then the representation is compact. 
 Also note that the product of the independent uncertain preferences in the lottery model results in a probability distribution over preference profiles and hence can be represented in the joint probability model. However, the change in representation can result in a blowup. Thus whereas the joint probability model is more general than the lottery model, it is not as compact. In view of this, complexity results for one model do not directly carry over to results for the other model.
 
 The most natural computational problems that we will consider are as follows.
    
\begin{itemize}
    \item {\sc PO-Probability}: what is the probability that a given assignment is PO?
    \item {\sc AssignmentWithHighestPO-Probability}: compute an assignment with the highest probability of being PO.
\end{itemize}
We also consider simpler problems than {\sc PO-Probability}:
\begin{itemize}
    \item {\sc IsPO-ProbabilityNon-Zero}: for a given assignment, is the probability of being PO non-zero?
    \item {\sc IsPO-ProbabilityOne}: for a given assignment, is the probability of being PO one?
\end{itemize}

We also consider a problem connected to {\sc AssignmentWithHighestPO-Probability}:
{\sc ExistsCertainlyPO-Assignment} asks whether there exists an assignment that has probability one for being PO. Note that {\sc ExistsPossiblyPO-Assignment}---the problem of checking whether there exists some PO assignment with non-zero probability---is trivial for all uncertainty models in which the induced `certainly preferred' relation is acyclic. The reason why it is trivial is because the certainly preferred relation can be completed in a way so that it is transitive and then for the completed deterministic preferences, there exists at least one PO assignment.

We say that a given uncertainty model is \emph{independent} if any uncertain preference profile $L$ under the model can be written as a product of uncertain preferences $L_a$ for all agents $a$, where all $L_a$'s are independent~\citep{ABG+16a}. Note that the lottery model is independent but the joint probability model is not.


    \paragraph{Results}
    We show that for both the lottery model and the joint probability model,  {\sc ExistsCertainlyPO-Assignment} is NP-complete. We also prove that {\sc AssignmentWithHighestPO-Probability} is NP-hard for both models. In view of the results, we see that as we move from deterministic preferences to uncertain preferences, the complexity of computing Pareto optimal assignments jumps significantly. 
On the other hand, we show that for a general class of uncertainty models called independent uncertainty models, both problems {\sc IsPO-ProbabilityNon-Zero} and {\sc IsPO-ProbabilityOne} can be solved in linear time. 
Whereas {\sc PO-Probability} is polynomial-time solvable for the joint probability model, we prove that the problem \#P-complete for the lottery model. Even for the lottery model, the problem becomes polynomial-time solvable if there is a constant number of uncertain agents. 

Our results are summarized in Table~\ref{table:summary:uncertainPO}.

     \begin{table*}[h]
         \centering
         \scalebox{0.9}{
         \begin{tabular}{lllll}
             \toprule
          &\textbf{Lottery}&\textbf{Joint Probability}\\
              &\textbf{Model}&\textbf{ Model}\\
	   \textbf{Problems}&&\\
              \midrule
     \multirow{3}{*}{\sc PO-Probability}&\#P-complete  &in P\\
     &but in FPT (parameter &\\
     & \# uncertain agents)&\\
     \midrule
     {\sc IsPO-ProbabilityNon-Zero}&in P&in P\\
     {\sc IsPO-ProbabilityOne}&in P&in P\\
     \midrule
     \multirow{2}{*}{\sc ExistsPossiblyPO-Assignment}&in P&in P\\
     &(trivially exists)&(trivially exists)\\
 \midrule
     {\sc ExistsCertainlyPO-Assignment}&NP-complete&NP-complete\\
     \midrule
      \multirow{1}{*}{\sc AssignmentWithHighestPO-Prob}&NP-hard&NP-hard\\
             \bottomrule
         \end{tabular}
         }

         \caption{Summary of results.} 
         \label{table:summary:uncertainPO}
     \end{table*}

\section{Preliminaries}

The setting we consider is the \emph{assignment problem} which is a triple $(N,O,\succ)$ where $N$ is the set of $n$ agents $\{1,\ldots, n\}$, $O=\{o_1,\ldots, o_n\}$ is the set of items, and $\succ=(\succ_1,\ldots,\succ_n)$ specifies complete, asymmetric, and transitive preferences $\succ_i$ of each agent $i$ over $O$. 
We will denote by $\mathcal{R}(O)$ as the set of all complete and transitive relations over the set of items $O$. We will denote by $\succ_S$ as the preference profile of agents from set $S\subset N$.

An \emph{assignment} is an allocation of items to agents, represented as an
$n\times n$ matrix $[p(i)(o_j)]_{\substack{1\leq i\leq n, 1\leq j\leq n}}$
such that for all $i\in N$, and $o_j\in O$, $p(i)(o_j)\in \{0,1\}$;  and for all $j\in \{1,\ldots, n\}$, $\sum_{i\in N}p(i)(o_j)= 1$. 
An agent $i$ gets item $o_j$ if and only if $p(i)(o_j)= 1$. Each row $p(i)=(p(i)(o_1),\ldots, p(i)(o_m))$ represents the \emph{allocation} of agent $i$.

An assignment $p$ is \emph{Pareto optimal} if there does not exist another assignment $q$ such that $q(i)\succsim_i p(i)$ for all $i\in N$ and $q(i)\succ_i p(i)$ for some $i\in N$.


				We first note a couple of well-known characterisations of Pareto optimal assignments. An assignment $p$ admits a \emph{trading cycle} $o_0,i_0,o_1,i_1,\ldots, o_{k-1},i_{k-1},o_0$ in which $p(i_j)(o_j)>0$ for all $j\in \{0,\ldots, k-1\}$, $o_{j+1 \mod k} \succ_j o_{j\mod k}$ for all $j\in \{0,\ldots, k-1\}$.

 
 
 \begin{fact}[Folklore]\label{fact:sdeff}
     An assignment is Pareto optimal if and only if  it does not admit a trading cycle.
     \end{fact}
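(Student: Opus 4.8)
The plan is to prove \factref{fact:sdeff} by establishing the two directions of the biconditional separately, using the standard exchange argument that underlies Pareto optimality in assignment problems.

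\textbf{The easy direction} (trading cycle $\Rightarrow$ not Pareto optimal): Suppose $p$ admits a trading cycle $o_0,i_0,o_1,i_1,\ldots,o_{k-1},i_{k-1},o_0$ with $p(i_j)(o_j)>0$ and $o_{j+1} \succ_{i_j} o_j$ (indices mod $k$). Since we are in the deterministic setting with a genuine assignment, $p(i_j)(o_j)>0$ actually means $p(i_j)(o_j)=1$, so each $i_j$ currently holds $o_j$. Construct $q$ from $p$ by reassigning each item $o_{j+1}$ to agent $i_j$ (equivalently, rotating items backward along the cycle) and leaving every agent not on the cycle untouched. One checks that $q$ is still a valid assignment: each item on the cycle is handed to exactly one agent, each cycle agent receives exactly one item, and off-cycle agents are unchanged. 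Then $q(i_j) \succ_{i_j} p(i_j)$ for every $j$ on the cycle and $q(i)=p(i)$ for all other $i$, so $q$ Pareto dominates $p$, contradicting Pareto optimality. (In the fractional/randomized reading of "assignment'' one would instead shift a small amount $\varepsilon = \min_j p(i_j)(o_j)$ of probability mass around the cycle, but with the integral definition given in the Preliminaries the all-or-nothing rotation suffices.)

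\textbf{The contrapositive of the other direction} (not Pareto optimal $\Rightarrow$ admits a trading cycle): Suppose $q$ Pareto dominates $p$, i.e.\ $q(i) \succsim_i p(i)$ for all $i$ and $q(i_0)\succ_{i_0} p(i_0)$ for some $i_0$. Again using integrality, let $x_i$ denote the item $p$ gives agent $i$ and $y_i$ the item $q$ gives agent $i$. Consider the functional graph on items where we draw an arc from $x_i$ to $y_i$ whenever $y_i \ne x_i$; equivalently, follow the permutation $\pi = q\text{-assignment}\circ (p\text{-assignment})^{-1}$ on the set of items. Starting from the item $x_{i_0}$ on which the two assignments differ and which is strictly improved, trace the cycle of $\pi$ through that item: $o_0 = x_{i_0}, o_1 = y_{i_0} = x_{i_1}, \ldots$, closing up because $\pi$ is a permutation of a finite set. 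For each agent $i_j$ on this cycle, $q$ gives $i_j$ the item $o_{j+1}$ while $p$ gives $i_j$ the item $o_j$, and $o_{j+1} = q(i_j) \succsim_{i_j} p(i_j) = o_j$; moreover $o_{j+1} \ne o_j$ along the cycle, and asymmetry of $\succ_{i_j}$ upgrades $\succsim$ to $\succ$. Hence $o_0,i_0,o_1,i_1,\ldots$ is a trading cycle of $p$.

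\textbf{Main obstacle.} The only real subtlety is matching the definition of "trading cycle'' (stated with the $p(i_j)(o_j)>0$ condition and mod-$k$ index bookkeeping) to the integral definition of assignment, and making sure the cycle extracted in the second direction is genuinely a \emph{cycle} — i.e.\ that once we enter the $\pi$-orbit of the strictly-improved item we return to it rather than wandering off — which follows immediately from $\pi$ being a bijection on the finite item set but should be spelled out. Everything else is the routine exchange argument; there is no serious combinatorial or analytic difficulty, and the result is indeed folklore (it is the assignment-problem analogue of the well-known fact that a matching/allocation is Pareto optimal iff the associated preference digraph is acyclic).
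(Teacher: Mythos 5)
Your proof is correct: the rotation argument for the ``trading cycle $\Rightarrow$ not Pareto optimal'' direction and the extraction of a cycle of the permutation $\pi = q\circ p^{-1}$ for the converse are both sound, and you correctly handle the upgrade from $\succsim_{i_j}$ to $\succ_{i_j}$ using strictness of the linear preferences (the only glossed detail, ensuring the cycle uses distinct agents/items so the rotation is a valid assignment, is routine since one can always pass to a minimal sub-cycle). Note that the paper states \factref{fact:sdeff} as folklore without giving any proof, so there is nothing in the paper to compare against; your argument is exactly the standard exchange argument that the folklore label refers to.
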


We will also use the following characterization of Pareto optimal discrete assignments~\citep{AbSo98a} that is defined with respect to outcomes of serial dictatorship. 
Serial dictatorship is an assignment mechanism that is specified with respect to a permutation $\pi$ over $N$: agents in the permutation are given the most preferred item that is still not allocated. We will denote by $SD(N,O,\succ,\pi)$ the outcome of applying serial dictatorship with respect to permutation $\pi$ over assignment problem $(N,O,\succ)$.    
 
 \begin{fact}[\citet{AbSo98a}]\label{fact:AbSo}
     An assignment is Pareto optimal if and only if it is an outcome of serial dictatorship.
     \end{fact}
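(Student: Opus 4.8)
The plan is to prove both directions of the equivalence. For the easy direction, I would show that any outcome of serial dictatorship is Pareto optimal. Suppose $p = SD(N,O,\succ,\pi)$ for some permutation $\pi = (i_1,\ldots,i_n)$, and suppose toward a contradiction that some assignment $q$ Pareto dominates $p$. Let $i_k$ be the first agent in the permutation order who strictly prefers $q(i_k)$ to $p(i_k)$; all agents $i_1,\ldots,i_{k-1}$ weakly prefer $q$ to $p$, and since preferences are strict linear orders over single items, "weakly prefers" here means $q(i_j) = p(i_j)$ for $j < k$. Hence the items $p(i_1),\ldots,p(i_{k-1})$ are exactly the items $q(i_1),\ldots,q(i_{k-1})$, so the same set $T$ of items has been used up by the first $k-1$ agents under both $p$ and $q$. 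But then $q(i_k)$ is an item outside $T$ that $i_k$ strictly prefers to $p(i_k)$, contradicting the definition of serial dictatorship, which gives $i_k$ the most preferred item in $O \setminus T$. This establishes that every serial dictatorship outcome is Pareto optimal.

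For the converse, I would take a Pareto optimal assignment $p$ and construct a permutation $\pi$ witnessing $p = SD(N,O,\succ,\pi)$. The natural approach is greedy: repeatedly find an agent $i$ among those not yet placed whose allocated item $p(i)$ is that agent's most preferred item among all items not yet allocated to earlier agents in the permutation being built; append $i$ to $\pi$; repeat. If this process always succeeds in finding such an agent at every stage, then by construction $SD(N,O,\succ,\pi)$ assigns each agent exactly $p(i)$, and we are done. So the crux is to show the process never gets stuck. Suppose at some stage the set $S$ of already-placed agents holds item set $T = \{p(i) : i \in S\}$, and no remaining agent $j \in N \setminus S$ has $p(j)$ as its top choice within $O \setminus T$. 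Then every remaining agent $j$ strictly prefers some item in $O \setminus T$ to $p(j)$; moreover that more-preferred item must be held (under $p$) by another remaining agent, since all items in $T$ are gone. This lets me build a directed graph on $N \setminus S$ with an edge from $j$ to the holder of a strictly-more-preferred item; every vertex has out-degree at least one, so there is a directed cycle, which is exactly a trading cycle for $p$. By Fact~\ref{fact:sdeff} this contradicts Pareto optimality of $p$.

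The main obstacle, and the step deserving the most care, is the cycle-extraction argument in the converse: I need to be precise that "more preferred item held by a remaining agent" is well-defined (it cannot be an item in $T$, and it cannot be $p(j)$ itself), so that out-degrees are genuinely positive on the remaining-agent subgraph, and that a directed cycle there translates verbatim into the trading-cycle definition given before Fact~\ref{fact:sdeff}. Everything else — the base case, the termination of the greedy loop after $n$ steps, and the bookkeeping that the constructed $\pi$ really reproduces $p$ — is routine. I would also remark that the easy direction implicitly uses that preferences are strict, which is consistent with the model $(N,O,\succ)$ fixed in the Preliminaries.
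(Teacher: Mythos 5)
Your proof is correct. Note, however, that the paper itself gives no proof of this statement: it is stated as a Fact with a citation to \citet{AbSo98a} (and the remark that it also follows from Proposition~1 of \citet{BrKi05a}), so there is no in-paper argument to compare against. What you wrote is essentially the standard proof from that literature: the forward direction via the first agent in the permutation at which a dominating assignment $q$ diverges from $SD(N,O,\succ,\pi)$ (using strictness of preferences to turn weak preference into equality for earlier agents), and the converse via the greedy construction of a permutation, where the only nontrivial step is that getting stuck forces every remaining agent to point to a strictly better item held by another remaining agent, yielding a directed cycle on the remaining agents that is exactly a trading cycle and contradicts Pareto optimality by Fact~\ref{fact:sdeff}. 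You correctly flag the two points that need care (the preferred item lies outside $T$ and differs from $p(j)$, so out-degrees are positive and there are no self-loops), and the bookkeeping you call routine is indeed routine, so the argument is complete and consistent with the model fixed in the Preliminaries.
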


Fact~\ref{fact:AbSo} also follows from Proposition 1 by \citet{BrKi05a}.
The facts above show that when preferences are deterministic, a Pareto optimal assignment can be computed or verified easily. We will now focus on similar problems but with the feature that agents have uncertain preferences. 

\begin{example}
	\begin{align*}
		1:&\quad a,b,c ~~(0.6)\\
		&\quad  b,a,c ~~(0.4)\\
		2:&\quad b,a,c\\
		3:&\quad c,b,a
			\end{align*}
			
		Consider the  assignment $abc$ in which $1$ gets $a$, $2$ gets $b$, and $3$ gets $c$. The probability of the assignment being Pareto optimal is 1. On the other hand, the assignment $bac$ has 0.4 probability of being Pareto  optimal.
	\end{example}

    \section{Joint Probability Model}

   
   We first observe that the {\sc PO-Probability} can be solved easily for the joint probability model.
   
              \begin{theorem}\label{thm:joint-POprobability-poly}
                  For the joint probability model, {\sc PO-Probability} can be solved in polynomial time.
                  \end{theorem}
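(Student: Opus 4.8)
The plan is to exploit the fact that in the joint probability model the preference profile is drawn from a distribution with \emph{explicitly listed} support: the input names a finite list of linear preference profiles $\succ^{(1)},\ldots,\succ^{(k)}$ together with probabilities $p_1,\ldots,p_k$ summing to one. The probability that a given assignment $p$ is Pareto optimal is then simply
\[
\Pr[p \text{ is PO}] \;=\; \sum_{t\,:\,p \text{ is PO under }\succ^{(t)}} p_t,
\]
so it suffices to decide, for each profile $\succ^{(t)}$ in the support, whether $p$ is Pareto optimal with respect to the deterministic preferences $\succ^{(t)}$, and then add up the weights of the profiles that pass the test.

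First I would invoke \factref{fact:sdeff} (or equivalently \factref{fact:AbSo}): for a fixed deterministic profile, checking Pareto optimality of $p$ reduces to checking for the existence of a trading cycle. This can be done in polynomial time --- e.g., build the directed graph on items in which there is an arc from $o_j$ to $o_{j'}$ whenever the agent currently holding $o_j$ strictly prefers $o_{j'}$, and test for a directed cycle in $O(n^2)$ time. So for each of the $k$ profiles in the support we spend polynomial time, giving total time polynomial in $k$ and $n$, which is polynomial in the input size because the support is listed explicitly in the joint probability model representation.

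Second I would assemble the answer: initialize an accumulator to $0$, iterate over the $t=1,\ldots,k$ profiles, and add $p_t$ to the accumulator exactly when the trading-cycle test reports that $p$ is Pareto optimal under $\succ^{(t)}$. The returned value is the exact probability. Correctness follows because Pareto optimality of $p$ is a deterministic event once the profile is fixed, the events $\{\succ = \succ^{(t)}\}$ partition the probability space, and \factref{fact:sdeff} guarantees the per-profile test is exact.

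There is essentially no real obstacle here; the only thing worth noting is that the efficiency hinges entirely on the representation. Because the joint probability model writes out its support (and associated probabilities) as part of the input, $k$ is bounded by the input size, and the whole computation is genuinely polynomial. By contrast --- and this is the point the paper is setting up --- in the lottery model the support of the induced distribution over profiles can be exponentially large (a product over agents), so the same "sum over the support" strategy is no longer polynomial, which is why {\sc PO-Probability} becomes \#P-complete there. For the present theorem, the straightforward enumerate-and-sum argument suffices.
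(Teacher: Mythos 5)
Your proposal is correct and follows essentially the same route as the paper's proof: enumerate the explicitly listed profiles in the support, test Pareto optimality of the given assignment for each fixed profile (the paper notes this per-profile check takes linear time; your trading-cycle argument via Fact~\ref{fact:sdeff} is the same idea), and sum the probabilities of the profiles that pass. Your added remark contrasting this with the lottery model matches the paper's framing and introduces nothing that changes the argument.
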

    						              \begin{proof}
                      The probability that a given assignment is PO is equivalent to the probability weight of the preference profiles for which the assignment is PO. This can be checked as follows. We check the preference profiles for which the given assignment is PO (for one profile, this can be checked in linear time). Then we add the probabilities of those profiles for which the assignment is PO. The sum of the probabilities is the probability that the assignment is PO.
                      \end{proof}
                      \begin{corollary}\label{cor:1jointprobab}
                           For the joint probability model, {\sc IsPO-ProbabilityNon-Zero} and {\sc IsPO-ProbabilityOne}  can be solved in polynomial time.
                          \end{corollary}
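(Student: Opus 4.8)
The plan is to derive the corollary directly from Theorem~\ref{thm:joint-POprobability-poly}. That theorem already supplies a polynomial-time algorithm that outputs the exact probability $p$ that a given assignment is PO under the joint probability model. To solve {\sc IsPO-ProbabilityNon-Zero} I would run this algorithm and return ``yes'' iff $p > 0$; to solve {\sc IsPO-ProbabilityOne} I would run it and return ``yes'' iff $p = 1$. Since the probabilities in the input are given explicitly (as rationals) and the algorithm of Theorem~\ref{thm:joint-POprobability-poly} performs only additions of input probabilities, both comparisons are trivial to carry out in polynomial time, so both problems inherit the polynomial-time bound.

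For completeness I would also record the slightly more self-contained argument that avoids summing probabilities altogether. The support of the joint distribution is given explicitly as a list of preference profiles together with their probabilities. For {\sc IsPO-ProbabilityNon-Zero}, iterate over this list and, using Fact~\ref{fact:sdeff} (absence of a trading cycle) or Fact~\ref{fact:AbSo} (realisability by serial dictatorship), test in linear time for each profile whether the given assignment is PO; answer ``yes'' as soon as some profile with positive probability passes the test. For {\sc IsPO-ProbabilityOne}, do the same but answer ``yes'' only if \emph{every} profile with positive probability passes the test. Each test is linear in the instance size and the number of tests is bounded by the size of the support, so the whole procedure runs in polynomial time.

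There is essentially no obstacle here: the only point requiring any care is the encoding of the probabilities, namely that ``$p$ is non-zero'' and ``$p$ equals one'' are decidable in polynomial time from the representation. This holds because the probabilities are part of the input and the only arithmetic performed is summation of input rationals. The substantive content of the corollary is thus entirely carried by Theorem~\ref{thm:joint-POprobability-poly}, with the two decision variants obtained by a single comparison of its output against $0$ and against $1$, respectively.
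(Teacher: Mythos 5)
Your proposal is correct and matches the paper's reasoning: the corollary is stated as an immediate consequence of Theorem~\ref{thm:joint-POprobability-poly}, obtained by computing the PO probability and comparing it to $0$ and to $1$, exactly as you do. Your alternative per-profile scan over the support is just an unrolled version of the same algorithm, so there is nothing to add.
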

                      
                          What about {\sc ExistsCertainlyPO-Assignment}?
                          This problem is equivalent to checking whether the sets of PO assignments have a non-empty intersection. We show that this problem is NP-complete even when the probability distribution is over two linear preference profiles.  
		    
		
    		We reduce from the NP-complete problem {\sc Serial\-Dictatorship\-Feasibility}---check whether there exists a permutation of agents for which serial dictatorship gives a particular item $o$ to an agent $i$~\citep{SaSe15a}.

	 \pbDef{{\sc SerialDictatorshipFeasibility}}
	 {$(N,O,\succ, i\in N, o\in O)$}
	 {Does there exist a permutation of agents for which serial dictatorship gives a particular item $o$ to an agent $i$?}

		For linear preference profiles, the set of Pareto optimal allocations are characterized by those that can be achieved via some serial dictatorship. Thus it follows that the following problem is also NP-complete: check whether there exists a Pareto optimal allocation in which a specified agent $i$ gets a specified item $o$.
		 

    		\begin{theorem}\label{joint:npc}
    			For the joint probability model, {\sc ExistsCertainlyPO-Assignment} is NP-complete even when the probability distribution is over two linear preference profiles.  
    			\end{theorem}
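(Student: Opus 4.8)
The plan is to reduce from {\sc SerialDictatorshipFeasibility}, which is NP-complete by \citet{SaSe15a}, using the reformulation already noted in the excerpt: deciding whether there exists a Pareto optimal assignment of the deterministic instance $(N,O,\succ)$ in which agent $i$ receives item $o$. Given such an instance $I = (N,O,\succ, i, o)$, I would construct a joint probability model instance over exactly two linear preference profiles on a slightly enlarged set of agents and items. The first profile is essentially $\succ$ (extended to the new agents/items), so that its set of Pareto optimal assignments corresponds to the Pareto optimal assignments of $I$; the second profile is a rigid "gadget" profile engineered so that it has a \emph{unique} Pareto optimal assignment, and that unique assignment forces the answer to the original question. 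Then an assignment is certainly Pareto optimal (PO with probability one) in the constructed instance if and only if it is Pareto optimal in \emph{both} profiles, i.e. if and only if it equals the gadget's unique PO assignment \emph{and} is Pareto optimal in the first profile — which happens exactly when the original instance has a Pareto optimal assignment giving $o$ to $i$.

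Concretely, here is how I would build the gadget. Add one fresh item $o^\star$. For the second profile, let every original agent rank the items so that serial dictatorship in \emph{every} order produces one fixed assignment — for instance, give agent $k$ a distinct "favorite" item $f(k)$ that every agent ranks below its own favorite, which makes the identity-style allocation the only trading-cycle-free one; by Fact~\ref{fact:sdeff} this profile has a unique Pareto optimal assignment $p^\star$. Choose the labeling of $f$ so that $p^\star$ assigns $o$ to $i$ (and assigns $o^\star$ to some designated agent). In the first profile, insert $o^\star$ at the bottom of every agent's preference list and add a dummy agent who ranks $o^\star$ first; this makes the dummy always take $o^\star$ and leaves the Pareto optimal assignments on $O$ among the real agents exactly those of $I$. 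Put probability, say, $1/2$ on each profile. An assignment $q$ of the enlarged instance is certainly PO iff it is PO in profile~1 and PO in profile~2; being PO in profile~2 pins $q$ down to $p^\star$ on the gadget coordinates and, by the bottom-placement of $o^\star$, being PO in profile~1 forces $q$ to assign $o$ to $i$. Hence a certainly PO assignment exists iff $I$ is a yes-instance. Membership in NP is immediate: guess an assignment and, by Corollary~\ref{cor:1jointprobab} (or directly by Fact~\ref{fact:sdeff}), verify in polynomial time that its PO-probability equals one.

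The main obstacle I anticipate is getting the gadget profile to have a genuinely \emph{unique} Pareto optimal assignment while simultaneously being compatible with forcing "$o$ to $i$" — uniqueness is what collapses "PO in both profiles" down to a single candidate and makes the reduction clean. If a strictly unique PO assignment is awkward to arrange, the fallback is to make the gadget's PO assignments \emph{all} agree on the coordinates that matter (the assignment of $o$ and of the bridging item $o^\star$) rather than insisting on global uniqueness; the argument goes through verbatim as long as every assignment that is PO in both profiles must give $o$ to $i$. I would also double-check that placing $o^\star$ at the bottom in profile~1 together with the dummy agent really does preserve the Pareto optimal structure of $I$ on the original agents — this follows because no trading cycle in profile~1 can involve $o^\star$ (nobody wants to trade \emph{into} it) or the dummy (who already holds its top choice), so trading cycles in profile~1 are exactly trading cycles of $I$, and Fact~\ref{fact:sdeff} finishes it.
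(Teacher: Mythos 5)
Your reduction has the right source problem and the right one-sided idea (force ``$o$ to $i$'' via the second profile), and your NP-membership argument is fine, but the gadget you propose breaks the hard direction of the equivalence. If the second profile has a \emph{unique} Pareto optimal assignment $p^\star$ (everyone gets a pre-chosen distinct favorite, with $f(i)=o$), then the only candidate for a certainly PO assignment is $p^\star$ itself, and the constructed instance is a yes-instance iff this one fixed assignment happens to be Pareto optimal under $\succ$ as well. That is not equivalent to {\sc SerialDictatorshipFeasibility}: the original instance may have a Pareto optimal assignment giving $o$ to $i$ that looks nothing like ``agent $k$ gets $f(k)$,'' in which case your construction answers no on a yes-instance (completeness fails). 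Indeed, your gadget reduces the problem to a single polynomial-time PO check, which cannot carry the NP-hardness. Your fallback remark (``only require that every assignment PO in both profiles gives $o$ to $i$'') is the right instinct, but it only restates the soundness direction; the substantive work is to make the second profile \emph{permissive} enough that whenever $\succ$ admits a Pareto optimal assignment $p$ giving $o$ to $i$, that same $p$ is still Pareto optimal under the second profile --- and you give no construction or argument for this.

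The paper's construction is tuned exactly to this point: the second profile $\succ'$ keeps every agent's relative order over $O\setminus\{o\}$ unchanged, only moving $o$ to the top of $i$'s list and to the bottom of every other agent's list (no extra item or dummy agent is needed). Soundness is then your easy swap argument, and completeness is proved by taking the serial dictatorship permutation $\pi$ that witnesses the yes-instance under $\succ$ and showing, by induction on the rounds of serial dictatorship, that $SD(N,O,\succ',\pi)=SD(N,O,\succ,\pi)=p$, so $p$ is PO under both profiles by Fact~\ref{fact:AbSo}. That induction (or some equivalent argument that $p$ survives as PO in the second profile) is the missing piece your proposal would need; with a rigid unique-PO gadget it cannot be supplied.
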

    			\begin{proof}
				The problem {\sc ExistsCertainlyPO-Assignment} is in NP because it can be checked in polynomial time whether a given assignment is certainly PO or not (Theorem~\ref{thm:joint-POprobability-poly}).
				
    			To prove NP-hardness, we reduce from the NP-complete problem : {\sc SerialDictatorshipFeasibility} --- given $(N,O,\succ)$, check whether there exists a permutation of agents for which serial dictatorship gives a particular item $o$ to an agent $i$~\citep{SaSe15a}.
				
    				We construct a joint probability over two preference profiles. One of the profiles is the same as $\succ$. In the other preference profile $\succ'$, agent $i$ has $o$ as the most preferred item and has the same order of preference over all other items as in $\succ_i$. Each agent $j\in N\setminus \{i\}$ has $o$ as the least preferred item. As for the other items, each $j\in N\setminus \{i\}$ has the same preferences over the items in $O\setminus \{o\}$ as in $\succ_j$.

  Our first observation is that an assignment is PO under profile $\succ'$ only if  $i$ gets $o$ in it. 
  \begin{claim}
	  An assignment is PO under profile $\succ'$ only if  $i$ gets $o$ in it.
	  \end{claim}
	  \begin{proof}
 The argument is as follows. If $i$ does not get $o$, then an agent $j\neq i$ gets it. However both $i$ and $j$ get a more preferred item under profile $\succ'$ by exchanging their items. 
 \end{proof}
 
   We now prove that we have a yes instance of {\sc SerialDictatorshipFeasibility} if and only if there exists a certainly PO assignment. 
   
	\bigskip	
		
Assume that there exists a certainly PO assignment. Then, it must be PO under $\succ'$ implying that, by our claim above, $i$ gets $o$ in this assignment. The same assignment must also be PO under profile $\succ$ which implies that there exists an assignment that is PO under profile $\succ$ in which $i$ gets $o$. In light of Fact~\ref{fact:AbSo}, this implies that there exists a serial dictatorship the outcome of which under profile $\succ$ is the same assignment. Hence, we have a yes instance of {\sc SerialDictatorshipFeasibility}.
				
%
%

\bigskip
Now consider the case when we have a yes instance of {\sc SerialDictatorshipFeasibility}. This means that there is a permutation $\pi$ under which $i$ gets $o$ when serial dictatorship is run.  Let us call this assignment by $p$. 
Due to Fact~\ref{fact:AbSo}, $p$ is PO under preference profile $\succ$. 
We want to prove that $p$ is PO under each possible preference profile. We already know that it is PO under $\succ$ so it remains to show that it is PO under $\succ'$. 
Due to Fact~\ref{fact:AbSo}, it is sufficient to prove that for profile $\succ'$, there exists a corresponding permutation of agents under which the outcome of serial dictatorship is $p$. 

In fact, we show that for $SD(N,O,\succ',\pi)=p$---i.e., 
the outcome of applying serial dictatorship with permutation $\pi$ is $p$ even if the preference profile is $\succ'$ instead of $\succ$.
In order to prove the statement we prove the following claim.
\begin{claim}
	The following are the same at each round, when applying serial dictatorship to profiles $\succ$ and $\succ'$, in both cases with respect to permutation $\pi$.
	\begin{itemize}
		\item the order in which items are allocated. 
		\item the allocation of each agent.
		\item set of remaining items. 
	\end{itemize}
	\end{claim}
\begin{proof}
The claim can be proved via induction on the number of rounds of serial dictatorship. 
For the base case, let us consider agent $\pi(1)$. 
If $\pi(1)=i$, then $\pi(1)$ picks up $o$ under both preference profiles. This is because, by construction (1) $\pi$ is a permutation under which $i$ gets $o$ when serial dictatorship is applied on $\succ$, and (2) $i$ has $o$ as his most preferred item under $\succ'$.
If $\pi(1)\neq i$, then $\pi(1)$ picks up some item $o'\neq o$ in $p=SD(N,O,\succ,\pi)$. Note that for $\pi(1)$, his most preferred item in both profiles must be $o'$. Hence by the end of the first round, the same item has been given to the same agent in both $\succ$ and $\succ'$.

For the induction, let us assume that $k$ rounds have taken place and the order in which items are allocated, the allocation of each agent in the first $k$ round and the set of unallocated items $T$ is the same under both profiles $\succ$ and $\succ'$. Now consider agent $\pi(k+1)$. 
If $\pi(k+1)=i$, then $i$ picks up item $o$ under $\succ$, implying that $o \in T$, which in turn implies that $i$ must pick up $o$ under $\succ'$ since $o$ is his most preferred item in $O$ under preference $\succ'_i$ and hence his most preferred item in $T$.
It remains to show what happens when $\pi(k+1)\neq i$. In that case $\pi(k+1)$ picks up some item $o'\neq o$ in $SD(N,O,\succ,\pi)$. This means that $o'$ is the most preferred item of agent $\pi(k+1)$  in set $T\subset O$  
under preference profile $\succ$, implying that $o'$ is the most preferred item of agent $\pi(k+1)$ in set $T$ under preference profile $\succ'$ as well. This completes the proof of the claim.
\end{proof}

We have thus proved that the outcome of applying serial dictatorship with respect to permutation $\pi$ is $p$ under both preference profiles $\succ$ and $\succ'$. Thus $p$ is PO under both possibly realizable preference profiles. 
To conclude, we have proved there exists a certainly PO assignment if and only if we have a yes instance of {\sc SerialDictatorshipFeasibility}. Since {\sc SerialDictatorshipFeasibility} is NP-complete, it follows that {\sc ExistsCertainlyPO-Assignment} is NP-complete. 
    \end{proof}

        \begin{corollary}
    	  For the joint probability model,  {\sc AssignmentWithHighestPO-Prob} is NP-hard.
    	    \end{corollary}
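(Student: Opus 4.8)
The plan is to derive NP-hardness by a simple reduction from {\sc ExistsCertainlyPO-Assignment}, which is NP-complete by Theorem~\ref{joint:npc}. The key observation is that {\sc ExistsCertainlyPO-Assignment} is nothing other than the question of whether the optimal value of {\sc AssignmentWithHighestPO-Prob} equals $1$: a certainly PO assignment exists if and only if some assignment is PO with probability one, i.e., if and only if the maximum, over all assignments, of the probability of being PO equals $1$.

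Concretely, I would argue as follows. Suppose we had an algorithm solving {\sc AssignmentWithHighestPO-Prob}; run it on the given joint probability instance to obtain an assignment $p$ of maximum PO-probability. By Theorem~\ref{thm:joint-POprobability-poly} we can compute the probability that $p$ is PO in polynomial time. We then answer ``yes'' to {\sc ExistsCertainlyPO-Assignment} precisely when this probability equals $1$, and ``no'' otherwise; correctness is immediate from the equivalence above. This is a polynomial-time Turing reduction from an NP-complete problem, so {\sc AssignmentWithHighestPO-Prob} is NP-hard.

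Equivalently, and perhaps cleaner to state, one can phrase {\sc AssignmentWithHighestPO-Prob} as the decision problem of deciding, for a given threshold $q \in [0,1]$, whether there is an assignment that is PO with probability at least $q$; instantiating $q = 1$ gives exactly {\sc ExistsCertainlyPO-Assignment}. Since the joint probability distribution used in Theorem~\ref{joint:npc} is supported on just two linear profiles, hardness even holds already in that restricted regime. There is essentially no technical obstacle here; the only point requiring care is the formulation, namely making explicit in which sense a search/optimization problem is ``NP-hard'' — via Turing reductions, or equivalently via its natural threshold decision variant.
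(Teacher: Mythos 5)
Your proposal is correct and matches the paper's own argument: the paper likewise assumes a polynomial-time solver for {\sc AssignmentWithHighestPO-Prob}, checks via the polynomial-time PO-probability computation (Corollary~\ref{cor:1jointprobab}) whether the returned assignment is PO with probability one, and thereby decides the NP-complete problem {\sc ExistsCertainlyPO-Assignment} from Theorem~\ref{joint:npc}. Your phrasing as an explicit Turing reduction (or threshold decision variant) is just a cleaner formalization of the same idea.
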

        \begin{proof}
    	    Assume to the contrary that there exists a polynomial-time algorithm to solve {\sc AssignmentWithHighestPO-Prob}. In that case, we can compute such an assignment $p$. By Corollary~\ref{cor:1jointprobab}, it can be checked in polynomial time whether $p$ is PO with probability one or not. If $p$ is PO with probability one, then we know that we have a yes instance of {\sc ExistsCertainlyPO-Assignment}. Otherwise, we have a no instance of {\sc ExistsCertainlyPO-Assignment}. Hence {\sc ExistsCertainlyPO-Assignment} is polynomial-time solvable, a contradiction.
\end{proof}
    
    Before dealing with the lottery model, we present some general algorithmic results that apply not just to the lottery model but a class of uncertainty models that includes the lottery model.    
    
	    \section{Independent Uncertainty Models}\label{sec:uncertain}

	    We first present a couple of general results that apply to a large class of uncertainty models that satisfy independence. Recall that a given uncertainty model is \emph{independent} if any uncertain preference profile $L$ under the model can be written as a product of uncertain preferences $L_a$ for all agents $a$, where all $L_a$'s are independent.

	    We first define the \emph{certainly preferred} relation $\succ_i^{certain}$ for agent $i$. We write $b \succ_i^{certain} c$ if and only if agent $i$ prefers $b$ over $c$ with probability 1.

	    \begin{theorem}\label{theorem:POdom}
	        For any independent uncertainty model in which the certainly preferred relation can be computed in polynomial given, given an assignment it can be checked in polynomial-time whether another assignment Pareto dominates it with probability one. 
	        \end{theorem}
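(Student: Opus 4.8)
The plan is to characterise, for a fixed assignment $p$, exactly which assignments $q$ can Pareto dominate $p$ under \emph{some} realizable preference profile versus under \emph{every} realizable profile, and to reduce the ``with probability one'' question to a purely combinatorial check on the certainly preferred relation. The key observation is that since the uncertainty model is independent, a profile is realizable iff each agent's realized linear order is consistent with (i.e.\ a linear extension of) that agent's certainly preferred relation $\succ_i^{certain}$, which by hypothesis is computable in polynomial time. So ``$q$ Pareto dominates $p$ with probability one'' should be equivalent to: $q(i) \succsim_i^{certain} p(i)$ for all $i$, and $q(i) \succ_i^{certain} p(i)$ for some $i$ --- wait, that ``some'' is too weak, because the dominating agent might differ across profiles. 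The correct condition is that $q(i) \succsim_i^{certain} p(i)$ for every agent $i$ (so that $q$ is weakly better for everyone no matter what), \emph{and} $q \neq p$. Indeed if $q(i) \succsim_i^{certain} p(i)$ for all $i$ and $q\neq p$, then in any realized profile some agent $i$ has $q(i) \neq p(i)$, and since the realized $\succ_i$ extends $\succ_i^{certain}$ together with strictness on distinct items we get $q(i) \succ_i p(i)$, so $q$ strictly dominates $p$ in that profile; conversely if $q(i) \not\succsim_i^{certain} p(i)$ for some $i$, there is a realizable extension in which $p(i) \succ_i q(i)$, so $q$ fails to dominate $p$ there.

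First I would make this equivalence precise and prove it carefully, being attentive to the fact that a realized preference is a strict linear order so $q(i) \neq p(i)$ forces a strict comparison in one direction; the certainly preferred relation only guarantees $q(i) \succsim_i^{certain} p(i)$, i.e.\ it rules out $p(i) \succ_i q(i)$ in every realization, which is exactly what is needed. I would also address the (trivial) degenerate case $q = p$, which never ``dominates'' by the definition of Pareto domination.

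Next, given the equivalence, the algorithm is immediate: compute $\succ_i^{certain}$ for each agent $i$ in polynomial time, then decide whether there \emph{exists} $q \neq p$ with $q(i) \succsim_i^{certain} p(i)$ for all $i$. This is a bipartite-matching / feasibility question: build the bipartite graph on agents and items where agent $i$ may be matched to item $o$ iff $o \succsim_i^{certain} p(i)$ (i.e.\ $o = p(i)$ or $o \succ_i^{certain} p(i)$), and ask whether this graph admits a perfect matching \emph{other than} $p$ itself. Equivalently, one can phrase this via the trading-cycle characterisation of Fact~\ref{fact:sdeff} relativised to $\succ^{certain}$: $p$ is \emph{not} dominated with probability one iff the digraph on items with an arc $o_j \to o_k$ whenever the agent holding $o_j$ under $p$ has $o_k \succ^{certain} o_j$ (or weakly prefers it) contains no cycle; searching for such a cycle is linear time. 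Either formulation runs in polynomial time, which gives the theorem.

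The main obstacle I anticipate is getting the equivalence exactly right --- in particular pinning down the role of the ``some agent strictly improves'' clause of Pareto domination and showing it is automatically satisfied once $q \neq p$ and all weak comparisons hold under $\succ^{certain}$, and dually constructing, when some weak comparison fails, an explicit realizable profile witnessing non-domination (this uses independence to vary one agent's order freely subject to consistency with $\succ_i^{certain}$, and uses that $\succ_i^{certain}$ can always be extended to a linear order placing $p(i)$ above $q(i)$ when $q(i) \not\succsim_i^{certain} p(i)$). Once that lemma is in place, the algorithmic part is routine.
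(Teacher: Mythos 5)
Your approach is in substance the same as the paper's. The paper also reduces the existential question to detecting a cycle in the trading graph where agent $i$ points to every item $o$ with $o \succ_i^{certain} p(i)$, and your characterization lemma ($q$ Pareto dominates $p$ with probability one iff $q \neq p$ and every agent certainly weakly prefers $q(i)$ to $p(i)$) is exactly the correctness argument the paper sketches; your bipartite-matching variant is equivalent, since a perfect matching different from $p$ in that graph decomposes, against $p$, into alternating cycles, which are precisely the trading cycles of Fact~\ref{fact:sdeff} restricted to the certainly preferred relation.

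One inaccuracy to repair: it is not true that a profile is realizable iff each agent's order is a linear extension of $\succ_i^{certain}$. Every realizable order extends $\succ_i^{certain}$, but the support may omit many extensions; for instance, an agent whose two possible orders are $a \succ b \succ c$ and $b \succ c \succ a$ has certainly preferred relation $\{b \succ c\}$, yet the extension $b \succ a \succ c$ is not realizable. Consequently, your plan to witness non-domination by ``extending $\succ_i^{certain}$ to a linear order placing $p(i)$ above $q(i)$'' does not work as stated, since that extension need not occur with positive probability. The fix is immediate and shortens the argument: if $q(i) \neq p(i)$ and it is not the case that $q(i) \succ_i^{certain} p(i)$, then by the definition of the certainly preferred relation and strictness of realized orders, with positive probability agent $i$ ranks $p(i)$ above $q(i)$, and on that event $q$ fails to dominate $p$. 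With this substitution neither direction of your lemma needs the extension characterization (or, in fact, independence), and the rest of your argument is sound.
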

	        \begin{proof}
	            Given an assignment $\omega$, we create a trading cycle graph $G$ in which each agent $i$ points to any item $o$ such that $o\succ_i^{certain} \omega(i)$. We now claim that there exists a cycle in $G$ if and only if the assignment $\omega$ is Pareto optimal with probability zero.

	    If there exists a cycle in $G$, then another assignment Pareto dominates $\omega$ with probability one. The reason is that each agent prefers the item he points to with probability one. Hence, if we implement the trade in the cycle, each agent in the cycle gets a certainly more preferred item. Therefore the assignment is Pareto dominated with probability one. 

	    Now suppose that there is an assignment that Pareto dominates $\omega$ with probability one. 
Equivalently, there exists another assignment in which each agent with a different allocation gets a certainly strictly more preferred item. But this means that there exists a cycle in $G$.
	            \end{proof}

%
%
%
%

	        \begin{theorem}\label{theorem:IsPO-ProbabilityOne}
	            For any independent uncertainty model,  {\sc IsPO-ProbabilityOne} can be solved in polynomial time. 
	            \end{theorem}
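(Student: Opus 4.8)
The plan is to reduce {\sc IsPO-ProbabilityOne} for an independent uncertainty model to a polynomial-time checkable structural condition on the assignment, and to make essential use of the machinery already established for the \emph{certainly preferred} relation. First I would observe that an assignment $p$ is Pareto optimal with probability one if and only if it is Pareto optimal under \emph{every} preference profile in the support of the uncertain profile $L$. By Fact~\ref{fact:sdeff}, a fixed assignment fails to be Pareto optimal under a particular realized profile exactly when that profile admits a trading cycle for $p$. So $p$ is certainly PO if and only if there is no realizable profile that admits a trading cycle for $p$.

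Next I would translate ``no realizable profile admits a trading cycle'' into a condition about the certainly preferred relations $\succ_i^{certain}$. The key point, which exploits independence, is the following: a trading cycle $o_0, i_0, o_1, i_1, \ldots, o_{k-1}, i_{k-1}, o_0$ for $p$ can be realized in \emph{some} profile in the support if and only if, for each agent $i_j$ on the cycle, it is \emph{possible} (probability nonzero) that $i_j$ prefers $o_{j+1}$ to $o_j$ — equivalently, it is \emph{not} the case that $o_j \succ_{i_j}^{certain} o_{j+1}$. Because the model is independent, these ``possibility'' events for distinct agents can be satisfied simultaneously by choosing each agent's linear order independently within its own support; there is no correlation obstruction. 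Hence a trading cycle for $p$ is realizable somewhere in the support precisely when there is a cycle $o_0, i_0, \ldots$ with $p(i_j)(o_j) = 1$ and $o_j \not\succ_{i_j}^{certain} o_{j+1}$ for every $j$.

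This gives a clean algorithmic test. I would build a directed graph $H$ on the item set $O$: for each agent $i$, with $p(i) = o$, add an arc from $o$ to every item $o'$ such that $o \not\succ_i^{certain} o'$ (equivalently, $o'$ is possibly at least as preferred as $o$ for $i$; one should be slightly careful with the weak-versus-strict convention, but since preferences are linear orders in each realization, ``not $o \succ_i^{certain} o'$'' means ``$o' \succ_i o$ with positive probability'' for $o' \ne o$). Then $p$ is Pareto optimal with probability one if and only if $H$ is acyclic. Acyclicity of a graph with $n$ vertices and $O(n^2)$ arcs is checkable in polynomial time (indeed linear in the size of $H$) by a standard cycle-detection / topological-sort procedure. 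The correctness of the reduction follows by combining Fact~\ref{fact:sdeff} with the realizability characterization of the previous paragraph; this is essentially a ``some profile makes $p$ non-PO'' dual of the ``every profile makes $p$ non-PO'' statement handled in Theorem~\ref{theorem:POdom}, and I would present it by closely paralleling that argument.

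The main obstacle, and the step needing the most care, is justifying that the independence of the model lets one realize the required strict preferences for all cycle agents \emph{simultaneously}: one must argue that choosing, for each agent $i_j$ on the cycle, a linear order in its support that ranks $o_{j+1}$ above $o_j$ (possible by the non-certainly-preferred condition), and an arbitrary order in the support for every agent off the cycle, yields a profile that is genuinely in the support of $L$ — this is exactly where ``$L$ is a product of independent $L_a$'s'' is invoked. A secondary subtlety is the direction and strictness of arcs in $H$ (handling ties correctly, which here do not arise within a realization but matter in how the certainly preferred relation is phrased), and confirming that the trading-cycle condition of Fact~\ref{fact:sdeff} uses strict preference on each edge, which matches the ``possibly strictly preferred'' arcs of $H$. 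Once these points are pinned down, the polynomial-time bound is immediate.
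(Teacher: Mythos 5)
Your proposal is correct and follows essentially the same route as the paper: build the trading-cycle graph in which agent $i$ points to every item $o'$ with $p(i)\not\succ_i^{certain} o'$ and test acyclicity, with independence guaranteeing that the per-agent ``possibly preferred'' choices along a cycle can be realized simultaneously in one profile of the support. Your explicit justification of that simultaneity step is in fact slightly more careful than the paper's own write-up, which leaves the use of independence implicit.
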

	            \begin{proof}
	                        Given an assignment $\omega$, we create a trading cycle graph $G$ in which each agent $i$ points to any item $o$ such that $\omega(i)\not\succ_i^{certain} o$. We claim that $\omega$ is Pareto optimal with probability one if and only if $G$ does not contain a cycle.

								
								We first show that if there exists a cycle, then it is not the case that $\omega$ is PO with probability one. Existence of a cycle implies that each agent in the cycle prefers another item to what he has received with non-zero probability, which in turn implies that if we implement the cycle then each of these agents will receive a more preferred item with non-zero probability. Therefore $\omega$ is Pareto dominated with non-zero probability. 

	         
	    	         If it is not the case that $\omega$ is Pareto optimal with probability one, then it must be that that another assignment Pareto dominates it with non-zero probability. Equivalently, there exists another assignment in which each agent with a different allocation gets a different item that is more preferred with non-zero probability. But this means that there exists a cycle in $G$.
	    	         \end{proof}

    \section{Lottery Model}

    We now focus on the lottery model. 
    Since the lottery model is a independent uncertainty model, Theorems~\ref{theorem:POdom} and \ref{theorem:IsPO-ProbabilityOne} apply to it.


    	    \begin{theorem}\label{theorem:IsPO-ProbabilityNonzeroLottery}
    	        For the lottery uncertainty model, {\sc IsPO-ProbabilityNon-Zero} can be solved in polynomial time.
    	        \end{theorem}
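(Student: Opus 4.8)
The plan is to reduce the question ``is the given assignment $\omega$ Pareto optimal with positive probability?'' to the existence of a \emph{single} preference profile in the support of the lottery model under which $\omega$ is Pareto optimal, and then to show that one can decide this without iterating over the (exponentially many) profiles. By Fact~\ref{fact:sdeff}, $\omega$ fails to be PO under a realized profile $\succ$ exactly when $\succ$ admits a trading cycle with respect to $\omega$. So $\omega$ is PO with probability zero iff \emph{every} profile in the support admits such a cycle. Equivalently, $\omega$ has nonzero PO-probability iff there is a way to pick, independently for each agent $i$, a linear order $\succ_i$ from the support of $L_i$, such that the resulting profile has no trading cycle relative to $\omega$.

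The key observation is that whether a given profile admits a trading cycle depends only, for each agent $i$, on the set $U_i(\succ_i) = \{\, o : o \succ_i \omega(i)\,\}$ of items agent $i$ strictly prefers to her own allotment $\omega(i)$; the trading-cycle graph has an arc $i \to o$ precisely when $o \in U_i(\succ_i)$, and a trading cycle is a cycle in this graph. Since we are free to choose each $\succ_i$ from $\mathrm{supp}(L_i)$ independently, the natural move is: for each agent $i$, let $S_i$ be the \emph{intersection} over all $\succ_i \in \mathrm{supp}(L_i)$ of the ``upper contour'' sets $U_i(\succ_i)$ --- i.e.\ $S_i = \{\, o : o \succ_i^{certain}\omega(i)\,\}$, the set of items $i$ certainly prefers to $\omega(i)$. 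I claim $\omega$ has nonzero PO-probability iff the directed graph $G$ with arc set $\{\, i \to o : o \in S_i \,\}$ is acyclic; this graph is exactly the one in the proof of Theorem~\ref{theorem:POdom}, so this also says: $\omega$ has nonzero PO-probability iff no assignment Pareto-dominates it with probability one.

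The two directions: ($\Leftarrow$) If $G$ is acyclic, I want to exhibit one profile in the support with no trading cycle. Here I would use Fact~\ref{fact:AbSo}: acyclicity of $G$ lets me topologically order the items/agents so that an agent only ``certainly-points forward''; I then argue that there is a serial-dictatorship permutation $\pi$ and a choice of orders $\succ_i \in \mathrm{supp}(L_i)$ making $\omega$ the serial-dictatorship outcome --- concretely, process agents in an order compatible with a topological sort of $G$, and for the agent taking item $\omega(i)$ at her turn, pick any $\succ_i$ in the support; since all items she certainly prefers to $\omega(i)$ lie in $S_i$ and have been removed earlier by acyclicity, whatever she does in fact prefer among the remaining items can be $\omega(i)$ for a suitable support order, or the argument reduces to showing the remaining-preferred items were already allocated. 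By Fact~\ref{fact:AbSo} the resulting $\omega$ is PO under that profile, so the PO-probability is at least $\prod_i L_i(\succ_i) > 0$. ($\Rightarrow$) If $G$ has a cycle, then by Theorem~\ref{theorem:POdom} some assignment Pareto-dominates $\omega$ with probability one, so $\omega$ is PO with probability zero. Finally, $G$ and its acyclicity are computable in linear time, giving the claimed bound.

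The main obstacle I anticipate is the ($\Leftarrow$) direction: it is not immediately obvious that acyclicity of the \emph{certainly}-preferred pointer graph suffices to realize $\omega$ as PO in \emph{some} support profile, because an agent's realized order may strictly prefer to $\omega(i)$ some item outside $S_i$ (an item she prefers in that realization but not certainly). I need to check that such items can always be arranged to have been allocated before that agent's turn in a suitable serial-dictatorship order --- i.e.\ that the topological-sort order on $G$, together with a careful choice of which support order each agent realizes, is consistent. If a naive topological order does not work, the fallback is to argue directly via Fact~\ref{fact:sdeff}: pick support orders greedily and show the trading-cycle graph of the realized profile, restricted to arcs actually present, remains acyclic because every such arc points to an item that is either in some $S_j$ or leads back into the acyclic structure. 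Getting this dependency bookkeeping exactly right is the crux; the rest is routine.
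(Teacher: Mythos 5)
Your reduction to acyclicity of the certainly-preferred graph is exactly where the argument breaks, and the direction you flagged as the ``main obstacle'' is not just hard to prove --- it is false. The equivalence you claim would make {\sc IsPO-ProbabilityNon-Zero} coincide with ``no single assignment Pareto-dominates $\omega$ with probability one'' (Theorem~\ref{theorem:POdom}), but PO-probability can be zero even when no fixed dominating assignment and no certain cycle exists, because different realizations may be killed by \emph{different} trading cycles. Concretely, take agents $1,2,3$ with items $a,b,c$, the assignment $\omega$ giving $a$ to $1$, $b$ to $2$, $c$ to $3$, and supports (each list with probability $\tfrac12$): agent $1$ has $b \succ_1 a \succ_1 c$ or $c \succ_1 a \succ_1 b$; agent $2$ has $c \succ_2 b \succ_2 a$ or $a \succ_2 b \succ_2 c$; agent $3$ has $a \succ_3 c \succ_3 b$ or $b \succ_3 c \succ_3 a$. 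No agent certainly prefers anything to his own item, so your graph $G$ is empty, hence acyclic, and your test declares nonzero PO-probability. Yet in every one of the eight realizations each agent's top item is some other agent's item, so every vertex of the realized trading-cycle graph restricted to $\{1,2,3\}$ has out-degree at least one and a trading cycle exists; by Fact~\ref{fact:sdeff} the PO-probability is zero. So intersecting the upper contour sets loses exactly the information that matters: ``PO with probability zero'' quantifies over cycles that may vary with the realization, while acyclicity of $G$ only rules out a cycle common to all realizations.

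The paper's proof takes a different, genuinely necessary route: it greedily builds a serial-dictatorship permutation realizing $\omega$, at each step looking for a remaining agent $i$ for whom $\omega(i)$ is the most preferred \emph{remaining} item in \emph{at least one} of $i$'s support lists, committing to that list and removing $i$ and $\omega(i)$. If the construction completes, the chosen support profile together with Fact~\ref{fact:AbSo} witnesses positive PO-probability; if it gets stuck, then every remaining agent, in every one of his lists, strictly prefers some remaining item to his own, which forces a trading cycle in every realization (as in the counterexample above) and hence PO-probability zero. Your fallback sketch gestures toward something like this, but the decision of the algorithm must be based on per-agent, per-list top items among the shrinking set of remaining items, not on the certainly-preferred relation; as it stands, your proposed polynomial-time test answers the wrong question.
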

    	        \begin{proof}
    	           Consider an assignment $p$ that we want check whether it is PO with non-zero probability. 
We use the following algorithm that can be considered as building a permutation of agents that is consistent with serial dictatorship producing the assignment $p$.

\begin{quote}
Initialize the set of remaining items to $O$, the remaining agents to $N$, and the permutation of the agents $\pi$ to an empty list.
Check if there exists some agent $i$ such that $p(i)$ is an available item that is the most preferred for $i$ in at least one of his preference lists.
If no such agent exists, return no. 
If such an agent exists, give the item to him, append $i$ to the permutation $\pi$, remove $i$ from the set of remaining agents, and remove $p(i)$ from the set of available items. 
Also select the preference of agent $i$ that had $p(i)$ as the most preferred remaining item, denoting it by $\succ_i$. 
Repeat until no more items are left. 
\end{quote}

	If the algorithm builds the whole permutation and does not return no, then we claim $p$ is Pareto optimal with non-zero probability. When an agent $i$ picks the item $p(i)$ in his turn, it means that the agent has at least one possible preference, $\succ_i$, in which $p(i)$ is the most preferred remaining item. Hence when applying serial dictatorship to the selected preference profile $\succ$ with respect to $\pi$, each agent $i$ picks $p(i)$ when his turn comes, resulting in $p$ as the outcome of serial dictatorship, hence implying that $p$ is PO with respect to $\succ$ and therefore PO with non-zero probability.
	
	If the algorithm returns no, we argue that $p$ is PO with zero probability.
Consider the first point in the algorithm where no agent $i$ has $p(i)$ as an available item that is the most preferred for $i$ in at least one of his preference lists. This means that no remaining agent gets his most preferred item (for any preference list) among the available items. Therefore, for each realisation of the preferences profiles, each of the remaining agents is interested in and points to another item held by another agent among the remaining agents. This implies the existence of a trading cycle for each realisation of the preference profiles, where some remaining agents can exchange items among themselves to get a more preferred item than in $p$. Thus $p$ is PO with zero probability.  
 \end{proof}
    
 We now prove that the problem of checking whether there exists an assignment that is PO with probability one is NP-complete.
   Although the proof is similar to the proof of Theorem~\ref{joint:npc}, we give a complete argument since a complexity result for the joint probability model does not directly imply a similar result for the lottery model. 
    
    		\begin{theorem}
    			For the lottery model, {\sc ExistsCertainlyPO-Assignment} is NP-complete.
    			\end{theorem}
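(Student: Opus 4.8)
The plan is to mirror the reduction used for the joint probability model in Theorem~\ref{joint:npc}, but to realize the ``certainly preferred'' structure through independent lotteries rather than a two-point joint distribution. Again I would reduce from {\sc SerialDictatorshipFeasibility}: given $(N,O,\succ,i,o)$, I build a lottery instance and show that a certainly PO assignment exists if and only if there is a permutation under which serial dictatorship gives $o$ to $i$. Membership in NP is immediate from Theorem~\ref{theorem:IsPO-ProbabilityOne}, since for the lottery model (an independent model) {\sc IsPO-ProbabilityOne} is polynomial-time decidable, so a guessed assignment can be verified.

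For the hardness construction, the key point is which ``certainly preferred'' relation I want each agent to have. I would keep agents $j \neq i$ deterministic with their given preference $\succ_j$ but modified so that $o$ is moved to the bottom; equivalently each such $j$ has a single preference list (a degenerate lottery) ranking $O\setminus\{o\}$ as in $\succ_j$ and $o$ last. For agent $i$ I would use a nontrivial lottery: with some probability $i$ has $o$ on top followed by $\succ_i$-order on the rest, and with the remaining probability $i$ has exactly $\succ_i$. The effect is that $i$'s certainly preferred relation is just the common refinement, which is $\succ_i$ restricted to $O\setminus\{o\}$ with $o$ incomparable to everything (since $o$ is sometimes top, sometimes in its original place). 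The point of using a lottery here rather than making $i$ deterministic is to force that an assignment is PO with probability one only if $i$ gets $o$: if $i$ gets some $o'\neq o$, then in the realization where $o$ is $i$'s top item, $i$ and whoever holds $o$ form a trading cycle (that agent $j$ has $o$ last), so the assignment is Pareto dominated in that realization and hence not certainly PO. Conversely, if $i$ gets $o$ in $p$, then $i$ is trivially not part of any improving cycle in either of $i$'s realizations (in one $o$ is top; in the other $o$'s position is irrelevant to whether $i$ wants to trade, which I should check carefully), and the agents $j\neq i$ are deterministic, so PO under the profile where $i$ uses $\succ_i$ is the only thing to verify.

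The two directions then go exactly as in Theorem~\ref{joint:npc}. For the forward direction: a certainly PO assignment $p$ must in particular be PO in every realization, hence (by the cycle argument above) $i$ gets $o$ in $p$; and $p$ must be PO under the profile where every agent including $i$ uses $\succ$, so by Fact~\ref{fact:AbSo} $p$ is the outcome of some serial dictatorship order $\pi$ under $\succ$, and this $\pi$ is a yes-certificate for {\sc SerialDictatorshipFeasibility}. For the reverse direction: given $\pi$ with $SD(N,O,\succ,\pi)=p$ and $p$ giving $o$ to $i$, I claim $p$ is PO in every realization. For the realization where $i$ uses $\succ_i$ and each $j$ uses its (bottom-$o$) list: run serial dictatorship with $\pi$ and argue by induction on rounds — as in the claim inside Theorem~\ref{joint:npc} — that the order of allocation, each agent's allocation, and the remaining set coincide with those under $\succ$, because moving $o$ to the bottom for the $j$'s never changes a pick (no $j$ ever picks $o$ since $i$ takes it, and before that $o$ being ranked low only makes it less attractive) and $i$ picks $o$ exactly when it did under $\succ$. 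For the other realization, where $i$'s list has $o$ on top: the same inductive argument works, since when $\pi$ reaches $i$ the item $o$ is still available (it was available under $\succ$ at that round, and nothing changed), and $o$ being $i$'s top choice only reinforces that $i$ picks $o$; for every other agent the pick is unchanged. Hence $p$ is the output of serial dictatorship with $\pi$ in both realizations, so by Fact~\ref{fact:AbSo} it is PO in both, i.e.\ certainly PO.

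The main obstacle is the reverse direction's case analysis for agent $i$'s lottery: I need $i$'s ``alternate'' preference list (with $o$ on top) to also produce $p$ under the same $\pi$, and more generally I need to make sure that whatever perturbation I apply to get a genuine nondegenerate lottery for $i$ does not create a new improving trading cycle when $i$ already holds $o$. The safest design is to have $i$'s second list agree with $\succ_i$ on $O\setminus\{o\}$ and only relocate $o$ to the top, so that the serial-dictatorship run is literally unchanged; then the inductive ``same at each round'' claim transfers verbatim from the proof of Theorem~\ref{joint:npc} with $\succ'$ there playing the role of $i$'s alternate list here. The rest is bookkeeping: confirming NP membership via Theorem~\ref{theorem:IsPO-ProbabilityOne}, and confirming that the only way to be certainly PO forces $i$ to receive $o$.
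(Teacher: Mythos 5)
There is a genuine gap, and it lies in the construction itself, not just in the write-up. By making every agent $j \neq i$ \emph{deterministic} with only the bottom-$o$ list $\succ_j'$, you remove the original profile $\succ$ from the support of the lottery: a certainly PO assignment then only has to be PO under the two realizable profiles (the one where $i$ uses $\succ_i$ and every $j\neq i$ uses $\succ_j'$, and the one where $i$ uses his $o$-on-top list). Your forward direction explicitly invokes ``$p$ must be PO under the profile where every agent including $i$ uses $\succ$,'' but that profile is never realized in your instance, and the implication genuinely fails. Concretely, take $N=\{1,2\}$, $O=\{a,b\}$, $i=1$, $o=b$, with $\succ_1: a \succ b$ and $\succ_2: b \succ a$. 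Under every serial dictatorship order agent $1$ receives $a$, so this is a no instance of {\sc SerialDictatorshipFeasibility}. In your reduced instance agent $2$ has the single list $a \succ b$ (since $o=b$ is moved to the bottom) and agent $1$ has the lottery over $b \succ a$ and $a \succ b$; the assignment giving $b$ to agent $1$ and $a$ to agent $2$ admits no trading cycle in either realizable profile (agent $2$ never points to $b$, agent $2$ holds his top item), hence it is certainly PO, and your reduction would answer yes on a no instance. The root cause is that moving $o$ to the bottom of every $\succ_j'$ deletes all edges into item $o$ in the trading-cycle graph, so Pareto dominations under $\succ$ that pass through agent $i$'s holding of $o$ can no longer be detected.

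The paper's construction avoids exactly this by giving \emph{every} agent $j$ a two-point lottery whose support contains the original list $\succ_j$ as well as the modified list $\succ_j'$. Then $\succ$ itself is a realizable profile, so a certainly PO assignment must be PO under $\succ$, which together with Fact~\ref{fact:AbSo} yields the permutation $\pi$ certifying a yes instance; the all-primed profile still forces $i$ to receive $o$; and the induction on rounds shows that $SD(N,O,\cdot,\pi)$ is unchanged on every realizable profile. Your NP-membership argument via Theorem~\ref{theorem:IsPO-ProbabilityOne} and your inductive ``same at each round'' argument for the reverse direction are fine; the fix is to keep $\succ_j$ in the support of each agent $j\neq i$ (as in Theorem~\ref{joint:npc} and the paper's lottery-model proof) rather than collapsing those agents to a single modified list.
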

    			\begin{proof}

				
							The problem {\sc ExistsCertainlyPO-Assignment} is in NP because it can be checked in polynomial time whether a given assignment is certainly PO or not (Theorem~\ref{theorem:IsPO-ProbabilityOne}). To prove NP-hardness, we use an argument similar to that used in the proof of Theorem~\ref{joint:npc}. 
				
	 We reduce from the NP-complete problem : {\sc SerialDictatorshipFeasibility} --- given an assignment setting $(N, O, \succ)$, check whether there exists a permutation of agents for which serial dictatorship gives a particular item $o$ to an agent $i$~\citep{SaSe15a}.
				
    				We construct preferences in which each agent $j\in N$ has two preference lists where one of them is $\succ_j$. 
				For agent $i$, we add another preference list $\succ_i'$ in which $i$'s most preferred item is $o$ and the rest of the items are in the same order as in $\succ_i$.
For each other agent $j\in N\setminus \{i\}$, we add a preference list $\succ_j'$ which is identical to $\succ_j$ except that $o$ is moved to the end of the list.

Our first observation is that an assignment is PO under profile $\succ'$ only if  $i$ gets $o$ in it. 
If $i$ does not get $o$, and agent $j\neq i$ gets it, then both $i$ and $j$ get a more preferred item under profile $\succ'$ by exchanging their items. 
Hence if there is any assignment that is certainly PO then it must give $o$ to $i$.

We prove that there exists a certainly PO assignment if and only if we have a yes instance of {\sc SerialDictatorshipFeasibility}.

\bigskip
If we have a no instance of {\sc SerialDictatorshipFeasibility}, then in no assignment that is PO under $\succ$ agent $i$ gets $o$. On the other hand, an assignment is PO under $\succ'$ only if $i$ receives $o$. Therefore, there does not exist any certainly PO assignment. 

\bigskip				
Now consider the case when we have a yes instance of {\sc SerialDictatorshipFeasibility}. This means that there is a permutation $\pi$ under which $i$ gets $o$ when serial dictatorship is run.  Let us call this assignment $p$. 
Due to Fact~\ref{fact:AbSo}, $p$ is PO under preference profile $\succ$. 
We want to prove that $p$ is PO under each possible preference profile. 
Due to Fact~\ref{fact:AbSo} it is sufficient to prove that for each possible realizable preference profile, there exists a corresponding permutation of agents under which the outcome of serial dictatorship is $p$. 

In fact, we show that for each possible preference profile $\succ''$, $SD(N,O,\succ'',\pi)=p$ i.e., 
the outcome of applying serial dictatorship with permutation $\pi$ is $p$. 
In order to prove the statement we prove the following claim. (Note that $p$ is PO under $\succ$ with respect to $\pi$.)
\begin{claim}
The following are the same at each round, when applying serial dictatorship to $\succ$ and any of the realizable preference profiles $\succ''$, in both cases with respect to permutation $\pi$
	\begin{itemize}
		\item the order in which items are allocated. 
		\item the allocation of each agent.
		\item set of remaining items. 
	\end{itemize}
	\end{claim}
\begin{proof}

The claim can be proved via induction on the number of rounds of serial dicatorship. 
For the base case, let us consider agent $\pi(1)$. 
If $\pi(1)=i$, then $\pi(1)$ picks up $o$ in all his possible preferences. 
This is because, by construction
(1) $\pi$ is a permutation under which $i$ gets $o$ when serial dictatorship is applied
on $\succ$, so it must be that $i$ ranks $o$ at the top of his list under $\succ_i$ and (2) $i$ has $o$ as his most preferred item under $\succ'_i$ by construction.
If $\pi(1)\neq i$, then $\pi(1)$ picks up some item $o'\neq o$ in $p=SD(N,O,\succ,\pi)$.
Note that for $\pi(1)$, his most preferred item is the same in all possible profiles. Hence by the end of the first round, the same item has been given to the same agent in all the realizable preferences.

For the induction, let us assume that $k$ rounds have taken place and the order in which items are allocated, the allocation of each agent in the first $k$ turns and the set of unallocated items $T$ is the same all the realizable preferences. Now consider the agent $\pi(k+1)$. 
If $\pi(k+1)=i$, then $i$ picks up item $o$ under $\succ$, implying that $o \in T$, which in turn implies that $i$ must pick $o$ under $\succ'_i$ since $o$ is his most preferred item in O under $\succ'_i$ and hence his most preferred item in $T$. 
It remains to show what happens when $\pi(k+1)\neq i$. In that case $\pi(k+1)$ picks some item $o'\neq o$ in $SD(N,O,\succ,\pi)$. This means that $o'$ is the most preferred item of agent $\pi(k+1)$in set $T\subset O$ of agent $\pi(k+1)$
under preference list $\succ_{\pi(k+1)}$, implying that $o'$ is the most preferred item of agent $\pi(k+1)$ in set $T$ under preference $\succ_{\pi(k+1)}'$ as well. 
This completes the proof of the claim.
\end{proof}

We have thus proved that the outcome of applying serial dictatorship with respect to permutation $\pi$ is $p$ under all possible preference profiles. Thus $p$ is PO under each possibly realizable preference profile when we have a yes instance of {\sc SerialDictatorshipFeasibility}. 

To conclude, we have proved there exists a certainly PO assignment if and only if we have a yes instance of {\sc SerialDictatorshipFeasibility}. Since {\sc SerialDictatorshipFeasibility} is NP-hard and {\sc ExistsCertainlyPO-Assignment} is in NP, it follows that {\sc ExistsCertainlyPO-Assignment} is NP-complete. 
    \end{proof}
    
    \begin{corollary}
	    For the lottery model, {\sc AssignmentWithHighestPO-Prob} is NP-hard.
	    \end{corollary}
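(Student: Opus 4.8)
The plan is to mimic exactly the argument used in the corollary for the joint probability model (the one immediately following \thmref{joint:npc}), since that argument is purely a reduction argument that does not care about which uncertainty model we are in — it only uses (a) the fact that {\sc ExistsCertainlyPO-Assignment} is NP-hard for the lottery model (just proved in the preceding theorem), and (b) that {\sc IsPO-ProbabilityOne} is polynomial-time solvable for the lottery model. The latter is available: the lottery model is an independent uncertainty model, so \thmref{theorem:IsPO-ProbabilityOne} applies and gives a polynomial-time test for whether a given assignment is PO with probability one.

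First I would suppose, for contradiction, that {\sc AssignmentWithHighestPO-Prob} is solvable in polynomial time for the lottery model. Then, given an instance of {\sc ExistsCertainlyPO-Assignment} for the lottery model (which is just an instance of the lottery model together with the implicit question), I run the assumed algorithm to obtain an assignment $p$ whose probability of being PO is maximum over all assignments. Next I invoke \thmref{theorem:IsPO-ProbabilityOne} to check in polynomial time whether $p$ is PO with probability one. If it is, then certainly a certainly-PO assignment exists, so the answer to {\sc ExistsCertainlyPO-Assignment} is yes. If $p$ is not PO with probability one, then since $p$ has the \emph{highest} PO-probability among all assignments, no assignment can be PO with probability one, so the answer is no. This decides {\sc ExistsCertainlyPO-Assignment} in polynomial time, contradicting the NP-hardness established in the preceding theorem (assuming $\mathrm{P}\neq\mathrm{NP}$).

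There is essentially no obstacle here; the only thing to be slightly careful about is the direction of the implication — we are reducing {\sc ExistsCertainlyPO-Assignment} to {\sc AssignmentWithHighestPO-Prob}, i.e. a polynomial-time algorithm for the latter yields one for the former, which is the correct direction to transfer NP-hardness. I would phrase it as a Turing (Cook) reduction exactly as in the joint-probability corollary, and simply cite the relevant earlier results. The corollary statement and a two-to-three sentence proof suffice; I would not re-derive anything.

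\begin{proof}
Assume for contradiction that there is a polynomial-time algorithm for {\sc AssignmentWithHighestPO-Prob} in the lottery model. Given an instance, use this algorithm to compute an assignment $p$ with the highest probability of being PO. By \thmref{theorem:IsPO-ProbabilityOne} (which applies since the lottery model is an independent uncertainty model), we can check in polynomial time whether $p$ is PO with probability one. If it is, then a certainly PO assignment exists; if it is not, then since $p$ maximises the probability of being PO over all assignments, no certainly PO assignment exists. Hence {\sc ExistsCertainlyPO-Assignment} would be polynomial-time solvable for the lottery model, contradicting the previous theorem. Therefore {\sc AssignmentWithHighestPO-Prob} is NP-hard.
\end{proof}
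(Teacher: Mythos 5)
Your proposal is correct and follows essentially the same route as the paper: assume a polynomial-time algorithm for {\sc AssignmentWithHighestPO-Prob}, compute the optimal assignment $p$, test via Theorem~\ref{theorem:IsPO-ProbabilityOne} whether $p$ is PO with probability one, and thereby decide {\sc ExistsCertainlyPO-Assignment}, contradicting its NP-hardness for the lottery model. No gaps to report.
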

    \begin{proof}
	    Assume to the contrary that there exists a polynomial-time algorithm to solve {\sc AssignmentWithHighestPO-Prob}. In that case, we can compute such an assignment $p$. By Theorem~\ref{theorem:IsPO-ProbabilityOne}, it can be checked in polynomial time whether $p$ is PO with probability one or not. If $p$ is PO with probability one, then we know that we have a yes instance of {\sc ExistsCertainlyPO-Assignment}. Otherwise, we have a no instance of {\sc ExistsCertainlyPO-Assignment}. 
			Hence {\sc ExistsCertainlyPO-Assignment} is polynomial-time solvable, a contradiction.
	    \end{proof}

In light of Theorem~\ref{theorem:IsPO-ProbabilityNonzeroLottery} and Theorem~\ref{theorem:IsPO-ProbabilityOne}, we know that for the lottery model, it can be checked in polynomial time whether the PO probability of a given assignment is zero or one, respectively. 
  We now turn to the problem of computing the probability that a given assignment is PO. We first present a polynomial-time solution for a restricted setting, and then show that {\sc PO-Probability} is \#P-complete for the lottery model in general.
 
    \begin{theorem}
    \label{thm:lottery-POprobability-poly}
	    For the lottery model, if the number of uncertain agents in constant, then {\sc PO-Probability} is polynomial-time solvable. 
	    \end{theorem}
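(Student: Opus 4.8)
The plan is to run, in the constant-uncertain-agents regime, essentially the same counting argument used for the joint probability model in Theorem~\ref{thm:joint-POprobability-poly}, after observing that only polynomially many preference profiles have non-zero probability. First I would partition the agents into the set $U$ of \emph{uncertain} agents—those whose given probability distribution has support of size at least two—and the set $C = N \setminus U$ of \emph{certain} agents, each of whom has a single linear order with probability one. By hypothesis $|U| = k$ is a constant.

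Next I would bound the number of realizable preference profiles. In the lottery model each agent's distribution is specified explicitly as a list of (linear order, probability) pairs, so the support of any single agent has size at most the input size $s$. A realizable preference profile is obtained by choosing, for each agent in $U$, one linear order from its support, while every certain agent contributes its unique order; hence there are at most $s^k$ realizable profiles, which is polynomial since $k$ is fixed. Since the lottery model is independent, the probability of a given realizable profile is the product, over $a \in U$, of the probability that agent $a$ draws the chosen order.

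The algorithm then enumerates all $O(s^k)$ realizable profiles; for each one it checks in linear time—via Fact~\ref{fact:sdeff} (no trading cycle) or Fact~\ref{fact:AbSo} (outcome of serial dictatorship)—whether the input assignment $p$ is Pareto optimal under that profile, and it accumulates the product-of-probabilities weight of every profile for which $p$ is PO. The accumulated sum is exactly the probability that $p$ is PO, because the realizable profiles form a probability space and whether $p$ is PO depends only on the realized profile. The whole procedure runs in time polynomial in the input size.

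I do not expect a serious obstacle here; the only points that need care are (i) arguing that the support of each uncertain agent, and hence the number of realizable profiles, is polynomially bounded precisely because the distributions are given in explicit list form (so the argument does not extend to succinct encodings), and (ii) invoking independence to justify that a profile's probability factors as the product of the per-agent draw probabilities. Everything else reduces to the same enumerate-and-check reasoning already established for the joint probability model.
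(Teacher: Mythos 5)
Your proposal is correct and follows essentially the same route as the paper: enumerate the at most $\ell^k$ (polynomially many, since $k$ is constant) realizable preference profiles, weight each by the product of the chosen preferences' probabilities via independence, and sum the weights of the profiles under which the assignment is Pareto optimal, which is exactly the reduction to the joint probability model used in the paper's proof of Theorem~\ref{thm:lottery-POprobability-poly} via Theorem~\ref{thm:joint-POprobability-poly}.
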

    \begin{proof}
		Let $\omega$ be a given assignment. Let constant $k$ denote the number of uncertain agents, and let the maximum number of preferences for any uncertain agent be $\ell$. 
			Therefore, the maximum number of preference profiles that are realizable is $\ell^k$ which is still polynomial in the input since $k=O(1)$. For each possible preference profile $\succ$, it is easy to compute the probability of $\omega$ being stable under $\succ$ by simply computing the product of the probabilities of the preferences chosen of the uncertain agents.
			Hence, we have reduced the problem to the problem {\sc PO-Probability} for the joint probability model which can be solved in polynomial time (Theorem~\ref{thm:joint-POprobability-poly}).
	     \end{proof}

    %


\begin{theorem}
	For the lottery model, {\sc PO-Probability} is \#P-complete, even when restricted to the case where each agent has at most two possible preferences.
	\end{theorem}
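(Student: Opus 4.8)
The plan is to prove membership in \#P and \#P-hardness separately; the hardness comes from a parsimonious-style reduction from \#IndependentSet, the problem of counting the independent sets of a graph, which is a classical \#P-complete problem.

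Membership is essentially Theorem~\ref{thm:joint-POprobability-poly} together with bookkeeping. For a fixed realised preference profile, whether a given assignment $\omega$ is PO is decidable in polynomial time by Fact~\ref{fact:sdeff}, and $\Pr[\omega\text{ is PO}]$ is the total probability weight of the support profiles under which $\omega$ is PO. Writing all agents' probabilities over a common denominator $b=\prod_a b_a$ (which has polynomially many bits, being a product of polynomially many input numbers), $b\cdot\Pr[\omega\text{ is PO}]$ is a sum over those profiles of a product of the chosen numerators; expanding each agent's choice into that many unit witnesses exhibits this as a \#P function, so {\sc PO-Probability} is polynomial-time equivalent to a problem in \#P.

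For hardness, given a graph $G=(V,E)$ (which we may assume has no isolated vertex, since deleting one merely halves the number of independent sets) I would create one agent and one item $o_v$ per vertex $v\in V$, and ask for the PO-probability of the assignment $\omega$ that gives $o_v$ to agent $v$. Each agent $v$ gets exactly two preference lists, each with probability $\tfrac12$: the list $\succ^{\mathrm{out}}_v$ that ranks $o_v$ first (with the remaining items in an arbitrary fixed order afterwards), and the list $\succ^{\mathrm{in}}_v$ that ranks the items $\{o_u \mid \{u,v\}\in E\}$ above $o_v$ and all remaining items below $o_v$. The heart of the argument is the claim that, in the realisation where $S\subseteq V$ is exactly the set of agents that chose their ``in'' list, $\omega$ is PO if and only if $S$ is independent in $G$. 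By Fact~\ref{fact:sdeff} it suffices to analyse trading cycles: by construction an agent envies another agent's item only if it chose its ``in'' list, and then it envies precisely the items held by its $G$-neighbours, so a trading cycle is exactly a directed cycle $v_0\to v_1\to\cdots\to v_{k-1}\to v_0$ with every $v_j\in S$ and $\{v_j,v_{j+1}\}\in E$. Such a cycle exists iff $S$ contains an edge of $G$: the forward direction is immediate, and conversely any edge $\{u,v\}$ with $u,v\in S$ already gives the length-two cycle $u\to v\to u$. Hence $\Pr[\omega\text{ is PO}]=\#\mathrm{IS}(G)/2^{|V|}$.

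Consequently a polynomial-time algorithm for {\sc PO-Probability} in the lottery model—even with at most two preferences per agent and even with all probabilities equal to $\tfrac12$—would compute $\#\mathrm{IS}(G)$, which yields \#P-hardness. The one step that requires care is the ``no spurious trading cycles'' claim, i.e.\ that the only directed cycles created are the two-cycles forced by edges inside $S$; this is exactly the observation that every agent lying on a trading cycle must have selected its ``in'' list and then points only to its $G$-neighbours, so any such cycle is supported on $S$ and witnesses an edge of $G$. The rest—the common-denominator bookkeeping for membership and the \#P-completeness of \#IndependentSet—is standard.
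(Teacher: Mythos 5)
Your proof is correct and takes essentially the same route as the paper: the paper reduces from Monotone-\#2SAT using exactly your gadget (each agent with probability $\tfrac12$ ranks its own item first and with probability $\tfrac12$ lifts its clause-partners' items above it, with the identity assignment as the query), and since satisfying assignments of a monotone 2CNF are precisely the complements of independent sets of the clause graph, your \#IndependentSet reduction is the same construction in different clothing. Your \#P-membership argument (clearing denominators and branching each agent's choice into unit-weight witnesses) likewise matches the paper's nondeterministic-machine argument.
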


	\begin{proof}
		We show \#P-hardness by reduction from the \#P-complete problem Monotone-\#2SAT---count the number of satisfying assignments for a 2CNF formula that contains no negation~\citep{Vali79a}. 
		
          	 \pbDef{{\sc Monotone-\#2SAT}}
          	 {A 2CNF formula that contains no negation.}
          	 {Count the number of satisfying assignments.}


		Let $\varphi$ be a monotone 2CNF formula with clauses $c_1,\ldots,c_m$ and variables $x_1,\ldots,x_n$.
		We construct an instance of {\sc PO-Probability} as follows.
		Consider agents $1,\ldots,n$ and items $o_1,\ldots,o_n$, and take the assignment $\sigma$ that gives each agent i item $o_i$.

		We construct the preferences of the agents as follows.
		Take an arbitrary agent $i$.
		Consider the set $\{j_1,\dotsc,j_u\}$ of indices $j$ such that the clause $(x_i \vee x_j)$ occurs in $\varphi$.
		(Without loss of generality, this set $\{j_1,\dotsc,j_u\}$ is non-empty.)
		Suppose that~$j_1 < j_2 < \dotsm < j_u$, in order to fix an
		(arbitrary) order over these indices.
		With probability $\frac{1}{2}$, agent $i$ has $o_i$ at the top of his preference list, followed by the rest of the items in arbitrary order.
		With probability $\frac{1}{2}$, agent $i$ has the following preference: $o_{j_1} \succ_i \dotsm \succ_i o_{j_u} \succ_i o_i \succ_i \dotsm$, where the remaining items appear in arbitrary order after $o_i$.

		This way, the possible preference profiles correspond one-to-one to the possible truth assignments over $x_1,\dotsc,x_n$.
		Namely, taking the preference $o_i \succ_i \dotsm$ for agent $i$ corresponds to setting $x_i$ to 1, and taking the other preference for agent $i$ corresponds to setting $x_i$ to 0.
		Moreover, each possible preference profile occurs with probability $\frac{1}{2^n}$.

	We show that the number of satisfying assignments for $\varphi$ is equal to the number of preference profiles under which $\sigma$ is Pareto optimal. In particular, we show that $\sigma$ is PO under a preference profile if and only if the corresponding truth assignment T satisfies $\varphi$.

		($\implies$) Take a possible preference profile $\succ$ under which $\sigma$ is PO and suppose, for a contradiction, that the corresponding truth assignment T does not satisfy $\varphi$.
		That is, there is some clause $c = (x_i \vee x_j)$ that is not satisfied, implying that in T both $x_i$ and $x_j$ are set to 0.
		Then we know that agent $i$ prefers $o_j$ to $o_i$ and agent $j$ prefers $o_i$ to $o_j$, hence they are willing to swap their assigned items. Therefore $\sigma$ is not Pareto optimal under $\succ$, a contradiction.

		($ \Longleftarrow $)
		Take a possible preference profile $\succ$ and suppose that the corresponding truth assignment T satisfies $\varphi$.
		We show that we cannot find a Pareto improvement of $\sigma$, implying that $\sigma$ is PO.
		Take an arbitrary agent $i$.
		First suppose that T sets $x_i$ to 1.
		This means that agent $i$ prefers $o_i$ to all other items, and so he is not willing to exchange it with another item.
		Now, suppose that T sets $x_i$ to 0. 
		Take the set $\{j_1,\dotsc,j_u\}$ of indices such that the clause $(x_i \vee x_j)$ occurs in $\varphi$. As $x_i$ is set to 0, this means that $i$ prefers $o_{j_1},\dotsc,o_{j_u}$ to $o_i$ and is willing to exchange $o_i$ with either of these items (but no other item). 
		Because T satisfies $\varphi$, we know that T sets $x_{j_1},\dotsc,x_{j_u}$ to 1, and consequently, agents $j_1,\dotsc,j_u$ prefer items $o_{j_1},\dotsc,o_{j_u}$ over all other items (respectively).
		So neither of these agents is willing to exchange their assigned item with $o_i$. 
		Therefore, as no Pareto improvement exists, $\sigma$ is Pareto optimal.

		The number of satisfying truth assignments of $\varphi$ is then exactly equal to $2^n$ times the probability that assignment $\sigma$ is Pareto optimal.
		Thus, {\sc PO-Probability} is \#P-hard, even when restricted to the case where each agent has at most two possible preferences.
		
		Next, we argue that {\sc PO-Probability} is in \#P.
		Technically speaking, the class \#P consists of counting problems,
		which are functions~$f : \Sigma^{*} \rightarrow \mathbb{N}$.
		We can consider {\sc PO-Probability} as such a function producing
		natural numbers in the following way.
		Without loss of generality, suppose that the probabilities in the input
		are all given as rational numbers with the same denominator~$d$.
		(We can transform the input in polynomial time to an equivalent
		input that satisfies this property.)
		Then the probability that the given assignment is Pareto optimal
		is~$\frac{z}{d^n}$ for some positive integer~$z$.
		We then consider the problem {\sc PO-Probability} as the function
		that returns~$z$, rather than the rational~$\frac{z}{d^n}$.
		
		We argue membership in \#P by describing a nondeterministic
		Turing machine~$\mathbb{M}$ that has the property that for each input,
		the number of accepting paths of~$\mathbb{M}$ for this input
		equals the number~$z$ that corresponds to the probability that the
		given matching is Pareto optimal.
		The existence of such a Turing machine implies membership
		in \#P~\cite{Vali79a}.
		The machine~$\mathbb{M}$ operates as follows.
		For each agent~$a_i$, it uses nondeterminism to generate~$d$
		different (partial) computation paths.
		These partial computation paths are concatenated,
		resulting in~$d^n$ total computation paths.
		Suppose that the input specifies~$\ell$ possible preference
		orders for agent~$a_i$, occurring with
		probabilities~$\frac{u_1}{d},\dotsc,\frac{u_{\ell}}{d}$, respectively.
		Then the first~$u_1$ partial computation paths generated for~$a_i$
		correspond to the first preference order, the next~$u_2$
		correspond to the second order, and so on.
		As a result, each total computation path corresponds to some
		preference profile.
		At the end of each computation path, the machine~$\mathbb{M}$
		checks (in deterministic polynomial time) whether the assignment is
		Pareto optimal for the corresponding preference profile, and accepts
		if and only if this is the case.
		It is straightforward to verify that the number of accepting computation
		paths of~$\mathbb{M}$ is exactly the number~$z$ such that the
		probability that the assignment is Pareto optimal is~$\frac{z}{d^n}$.
		Therefore, we know that {\sc PO-Probability} is in \#P.
		\end{proof}

		We showed that
		when there are only a constant number of uncertain agents,
		we can compute the PO probability in polynomial time for the
		lottery model (Theorem~\ref{thm:lottery-POprobability-poly}).
		However, the order of the polynomial that upper bounds the running
		time of our proposed algorithm grows with the number of uncertain
		agents.
		In particular, when~$k$ is the number of uncertain agents,
		and~$\ell$ is the maximum number of possible preference orders
		for these uncertain agents,
		the running time of the algorithm outlined 
		in the proof of Theorem~\ref{thm:lottery-POprobability-poly}
		is~$\Omega(\ell^{k})$.
		We improve on this result by showing that there exists a fixed-parameter
		tractable algorithm that computes the PO probability for the lottery model---%
		that is, an algorithm running in time~$f(k)n^c$
		for some computable function~$f$ and some fixed constant~$c$
		independent of~$k$, where~$n$ denotes the input size.
		In other words, we show that the parameterized problem
		$k$-{\sc PO-Probability}, where the parameter is the number of
		uncertain agents, is fixed-parameter tractable for the lottery model.

		\begin{theorem}
		For the lottery model,
		$k$-{\sc PO-Probability} can be solved in
		fixed-parameter tractable time.
		\end{theorem}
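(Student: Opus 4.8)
The plan is to replace the brute-force enumeration over all $\Omega(\ell^{k})$ realizable profiles (where $\ell$ bounds the number of possible preferences of an uncertain agent and $k$ is the number of uncertain agents) by an analysis of a \emph{contracted} digraph that lives only on the $k$ uncertain agents and whose arcs depend on the uncertain agents' realized preferences in an independent, ``$k$-local'' way. Throughout I would use the trading-cycle characterisation (Fact~\ref{fact:sdeff}): for a fixed profile $\succ$, the target assignment $\omega$ is PO iff the digraph $G_\succ$ on the agent set --- with an arc $i\to j$ whenever agent $i$ strictly prefers $\omega(j)$ to $\omega(i)$ under $\succ_i$ --- is acyclic. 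The out-arcs of agent $i$ in $G_\succ$ depend only on $\succ_i$, so the arcs among the certain agents are fixed. I would first check in polynomial time whether the subgraph of $G_\succ$ induced on the certain agents contains a directed cycle; if so, $\omega$ is PO with probability $0$ and we are done. Otherwise that subgraph is a DAG, so every directed cycle of $G_\succ$ must pass through at least one uncertain agent.

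Next I would define a digraph $H$ on the set $U$ of the $k$ uncertain agents, with self-loops allowed: put an arc $u\to u'$ (possibly $u=u'$) iff $u$ reaches $u'$ in $G_\succ$ along a positive-length path all of whose internal vertices are certain agents. The core lemma is that $G_\succ$ has a directed cycle iff $H$ has a cycle (a self-loop counting as a cycle). One direction is immediate: expand each arc of $H$ into a witnessing path and concatenate around an $H$-cycle to get a positive-length closed walk in $G_\succ$. For the other direction, take a simple cycle of $G_\succ$; it meets $U$, so reading off the uncertain agents in cyclic order decomposes the cycle into segments whose interiors lie in the fixed certain DAG, and each segment yields an arc of $H$, producing a closed walk and hence a cycle in $H$. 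The crucial point is that the set $N_H^+(u)$ of out-neighbours of a fixed $u$ in $H$ is a function of $\succ_u$ alone (given $\omega$ and the fixed certain preferences), and is computable in polynomial time: from $u$'s realized preference read off its out-arcs in $G_\succ$, follow the fixed certain DAG, and put $u'\in N_H^+(u)$ iff $u'$ is hit directly or some certain agent reachable from $u$ prefers $\omega(u')$ to its own item --- with $u\in N_H^+(u)$ (a self-loop) iff some such reachable certain agent prefers $\omega(u)$.

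The algorithm then is: for each uncertain agent $u$ and each of its at most $\ell$ possible preferences, compute $N_H^+(u)\subseteq U$; grouping the preferences of $u$ by the resulting subset yields a probability distribution of $u$ over at most $2^{k}$ possible out-neighbourhoods, and these $k$ distributions are mutually independent because the lottery model is independent. Finally, enumerate all at most $(2^{k})^{k}=2^{k^{2}}$ ways of selecting one out-neighbourhood per uncertain agent; for each selection test acyclicity of the resulting $k$-vertex digraph $H$ in $O(k^{2})$ time, and sum the products of the $k$ corresponding probabilities over those selections that yield an acyclic $H$. By the lemma this sum equals $\Pr[\omega \text{ is PO}]$. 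The total running time is $2^{k^{2}}\cdot\mathrm{poly}(k)$ plus the polynomial preprocessing (computing $G_\succ$, the certain-agents cycle check, and the $N_H^+(u)$ for all $u$ and all their preferences), hence of the form $f(k)\,n^{c}$.

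The main obstacle, and the part needing care, is the cycle-equivalence lemma between $G_\succ$ and the contracted graph $H$: one must handle the self-loop corner case correctly, justify the preliminary cycle check among certain agents, and --- most importantly --- verify that each $N_H^+(u)$ is genuinely a function of $u$'s own preference (and not of the other uncertain agents' preferences), since it is precisely this locality that turns independence of the lottery model into the product form of the probability and keeps the enumeration at $2^{k^{2}}$ rather than $\ell^{k}$.
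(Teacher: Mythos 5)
Your proposal is correct, and it rests on the same key insight as the paper's proof -- namely that, once one checks that the fixed trading subgraph induced by the certain agents is acyclic, the only information that matters is the reachability pattern among the $k$ uncertain agents (an object describable by $O(k^2)$ bits), and that under the lottery model each uncertain agent's contribution to this pattern depends only on its own realized preference, so independence lets the probability factor. Where you genuinely diverge is in how the counting is organized. The paper processes the uncertain agents sequentially, encoding the accumulated set $\Pi'\subseteq\{o_1,\dots,o_k\}^2$ of trading paths as the state of an auxiliary weighted digraph with $O(ku2^{k^2})$ vertices, and obtains the probability (times $d^k$) as a weighted count of homomorphisms of a path $P_{2k+2}$ into that graph, invoking an extension of the Flum--Grohe bounded-treewidth homomorphism-counting algorithm. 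You instead contract the certain agents away once and for all, prove a clean cycle-equivalence lemma between $G_\succ$ and the $k$-vertex contracted digraph $H$ (with the self-loop case handled correctly, since an uncertain agent never points to itself directly), observe that $N_H^+(u)$ is a function of $\succ_u$ alone, and then brute-force over the at most $2^{k^2}$ combinations of out-neighbourhoods, summing the products of the per-agent probabilities of the acyclic ones. Both routes run in time $2^{O(k^2)}$ plus a polynomial, but yours is more elementary and self-contained: it replaces the homomorphism-counting machinery (whose weighted extension the paper only sketches) with a direct enumeration and an $O(k^2)$ acyclicity test, at the cost of enumerating combinations rather than doing a dynamic program over agents; the paper's DP formulation would scale better if one wanted to shrink the state space below full reachability sets, but for the theorem as stated the two arguments deliver the same fixed-parameter tractability.
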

		\begin{proof}
		Take an arbitrary instance of the problem $k$-{\sc PO-Probability},
		consisting of agents~$1,\dotsc,n$, objects~$o_1,\dotsc,o_n$,
		and an assignment $\sigma$.
		Without loss of generality, assume that the
		assignment gives each agent~$i$ the object~$o_i$, and
		that the uncertain agents are agents~$1,\dotsc,k$.
		For each uncertain agent~$i$, let~$\succ_{i,1},\dotsc,\succ_{i,u_i}$
		denote the different possible preferences for agent~$i$.

		Additionally, assume without loss of generality that for
		each of the uncertain agents~$1,\dotsc,k$, each of the possible
		preferences for these agents occurs with probability~$\frac{\ell}{d}$,
		where the numerator~$\ell$ can vary between different agents
		and different possible preferences, but where the
		denominator~$d$ is common among all agents and all possible
		preferences.
		In other words, all probabilities mentioned in the instance are rational
		numbers that share a common denominator~$d$.
		If this were not the case, we could straightforwardly transform the instance
		in polynomial time to an equivalent instance that does satisfy this
		property.

		Also, assume without loss of generality that there exists no
		trading cycle that involves only the agents~$o_{k+1},\dotsc,o_{n}$.
		If this were the case, the assignment is Pareto optimal with probability
		zero, and we can filter out such trivial instances using a polynomial-time preprocessing procedure.

		We now  how to compute the probability that the given assignment
		is Pareto optimal in fixed-parameter tractable time.
		Our computation will proceed in three stages:
		\begin{itemize}
		  \item[(1)] We construct a directed graph~$G$ with~$O(ku2^{k^2})$
		    vertices, where the edges are weighted.
		    Here~$u$ denotes the maximum number of possible preferences
		    for any uncertain agent.
		  \item[(2)] We count the number of homomorphisms~$f$ of a directed
		    path~$P_{2k+2}$ of length~$2k+2$ to this graph~$G$,
		    where each homomorphism is counted multiple times
		    according to (the product of) the weights on the edges in~$f(P_{2k+2})$.
		    This counting can be done in polynomial time
		    using an extension of a known algorithm
		    \cite{FlGr04a,FlGr06a}.
		  \item[(3)] We divide the weighted total number of homomorphisms
		    of~$P_{2k+2}$ to~$G$ by the number~$d^{k}$ to obtain the probability
		    that the given assignment is Pareto optimal.
		\end{itemize}

		We begin with phase~(1), and we construct the weighted,
		directed graph~$G$.
		Let~$\Pi = \{ o_1,\dotsc,o_k \}^2$ be the set of all possible
		pairs~$(o_i,o_j)$ of objects among~$o_1,\dotsc,o_k$.
		We define the set~$V$ of vertices of~$G$ as follows.
		First, we define an auxiliary set~$V'$:
		\[ V' = \{ 1,\dotsc,k+1 \} \cup \{ (i,\succ_{i,j})\ |\ i \in [k+1], j \in [u_i] \}. \]
		Then, we define the set~$V$ of vertices as follows:
		\[ V = \{ s,t \} \cup \{ (v',\Pi')\ |\ v' \in V', \Pi' \subseteq \Pi \}. \]
		That is, the graph~$G$ has vertices~$s$ and~$t$,
		and~$2^{k^2}$ copies of each element in~$V'$
		(one for each~$\Pi' \subseteq \Pi$).
		Intuitively, the vertices~$s$ and~$t$ will act as source and target
		for each homomorphism of~$P_{2k+2}$ to~$G$.

		The sets~$\Pi' \subseteq \Pi$ will intuitively be used to memorize
		the `trading paths' (i.e., paths in the trading cycle graph)
		that result from particular choices of the
		preference orders~$\succ_{i,j}$ chosen for the agents~$1,\dotsc,k$.
		That is, each~$(o_i,o_j) \in \Pi'$ corresponds to a path from~$o_i$
		to~$o_j$ in the directed graph with vertices~$o_1,\dotsc,o_n$
		where there is an edge from~$o_{i'}$ to~$o_{i''}$ if and only if
		agent~$i'$ prefers object~$o_{i''}$ to object~$o_{i'}$.

		We construct the set~$E$ of (weighted and directed) edges as follows.
		\begin{itemize}
		\item
		We add an edge with weight~$1$ from~$s$ to~$(1,\emptyset)$.
		\item
		For each~$i \in [k]$, each~$j \in [u_i]$, and
		each~$\Pi' \subseteq \Pi$, we add an edge from~$(i,\Pi')$
		to~$(i,\succ_{i,j},\Pi')$.
		This edge has weight~$\ell$, where the possible preference
		order~$\succ_{i,j}$ for agent~$i$ occurs with probability~$\frac{\ell}{d}$.
		\item
		For each~$i \in [k]$, each~$j \in [u_i]$, and
		each~$\Pi' \subseteq \Pi$, we add an edge with weight~$1$
		from~$(i,\succ_{i,j},\Pi')$ to the vertex~$(i+1,\Pi'')$,
		for some~$\Pi' \subseteq \Pi'' \subseteq \Pi$.
		The choice of~$\Pi''$ is determined as follows.
		Consider the following graph~$G_{\Pi',\succ_{i,j}}$.
		The vertices of this graph are~$o_1,\dotsc,o_n$.
		For each pair~$(o_{i'},o_{i''})$ of vertices among~$o_{k+1},\dotsc,o_n$,
		there is an edge from~$o_{i'}$ to~$o_{i''}$ if and only
		if agent~$j$ prefers object~$o_{i''}$ to object~$o_{i'}$.
		Moreover, for each~$(o_{i'},o_{i''}) \in \Pi'$, we add an
		edge from~$o_{i'}$ to~$o_{i''}$.
		Finally, for each agent~$o_{i'}$ among~$o_{k+1},\dotsc,o_n$,
		we add an edge from~$o_i$ to~$o_{i'}$ if and only
		if~$o_{i'} \succ_{i,j} o_i$.
		We then let~$\Pi'' \subseteq \Pi$ be the set of all
		pairs~$(o_{i'},o_{i''})$ such that there is a path
		from~$o_{i'}$ to~$o_{i''}$ in~$G_{\Pi',\succ_{i,j}}$.
		Clearly,~$\Pi' \subseteq \Pi''$.
		\item
		For each~$\Pi' \subseteq \Pi$ such that~$(o_i,o_i) \not\in \Pi'$
		for all~$i$ among~$1,\dotsc,k$,
		we add an edge with weight~$1$
		from~$(k+1,\Pi')$ to~$t$.
		\end{itemize}

		Clearly, any homomorphism~$f$ from the directed
		path~$P_{2k+2}$ of length~$2k+2$
		to~$G$ must map the first vertex of the path to~$s$
		and the last vertex of the path to~$t$.
		Each such homomorphism must map
		the $(2i)$-th vertex of the path to some vertex~$(i,\Pi')$
		and the $(2i+1)$-th vertex of the path to some
		vertex~$(i,\succ_{i,j},\Pi')$.
		Also, the $(2k+2)$-th vertex of the path must be
		mapped to some vertex~$(k+1,\Pi')$
		where~$\Pi'$ contains no pair~$(o_i,o_i)$.
		These observations follows directly from the construction of~$G$.

		Moreover, each homomorphism~$f'$ from the directed
		path~$P_{2k+1}$ of length~$2k+1$ to~$G$
		that maps the first vertex of the path to~$s$
		is uniquely determined by some series of
		choices~$\succ_{1,j_1},\dotsc,\succ_{k,j_k}$ for the possible
		preferences of the uncertain agents~$1,\dotsc,k$.
		We argue that such a homomorphism~$f'$ can be extended
		to a homomorphism~$f$ from~$P_{2k+2}$ to~$G$
		if and only if the corresponding
		preferences~$\succ_{1,j_1},\dotsc,\succ_{k,j_k}$
		lead to a trading cycle.
		The homomorphism~$f'$ maps the $(2k+2)$-th vertex of the
		path to some pair~$(k+1,\Pi')$.
		Here~$\Pi'$ is the set of pairs~$(o_i,o_j) \in \{ o_1,\dotsc,o_k \}^2$
		such that the preferences~$\succ_{1,j_1},\dotsc,\succ_{k,j_k}$
		lead to a trading path from~$o_i$ to~$o_j$.
		By our assumption that there exists no trading cycle that involves only the agents~$o_{k+1},\dotsc,o_{n}$,
		we know that the set~$\Pi'$ contains some pair~$(o_i,o_i)$
		if and only if there exists a trading cycle.
		Therefore, by construction of the edges between~$(k+1,\Pi')$
		and~$t$,
		we know that the choices~$\succ_{1,j_1},\dotsc,\succ_{k,j_k}$
		of preferences for the agents~$1,\dotsc,k$ that make the assignment
		Pareto optimal are in one-to-one correspondence with the
		homomorphisms~$f$ from~$P_{2k+2}$ to~$G$.

		We count each such homomorphism~$f$ in a weighted fashion as follows---%
		this is phase~(2).
		Take a homomorphism~$f$ from~$P_{2k+2}$ to~$G$.
		Its weight in the grand total is the product of the weights for each
		edge in~$f(P_{2k+2})$.
		The only edges in~$f(P_{2k+2})$ that have weigth~$>1$
		are edges from~$(i,\Pi')$ to~$(i,\succ_{i,j},\Pi')$.
		Such an edge has weight~$\ell$, where the probability that~$\succ_{i,j}$
		occurs is~$\frac{\ell}{d}$.
		From this, it is straightforward to verify that the total weighted sum of
		all homomorphisms is equal to~$p \cdot d^k$, where~$p$ is the probability
		that the given assignment is Pareto optimal.
		Therefore, in order to compute~$p$, we only need to take the weighted
		sum of all homomorphisms, and divide it by~$d^k$---%
		this is phase~(3) of the algorithm.

		All that remains is to show how we can compute the weighted sum of
		all homomorphisms~$f$ from~$P_{2k+2}$ to~$G$ in polynomial time.
		We can do this by extending a known polynomial-time algorithm to
		count the number of homomorphisms of a graph whose treewidth is bounded
		by a fixed constant into another graph \cite[Theorem~14.7]{FlGr06a}.
		Since paths have treewidth~$1$, counting the number of homomorphisms
		from a path to another graph can be done in polynomial time using this
		algorithm.
		This algorithm uses a dynamic programming approach to count the number
		of homomorphisms.
		This dynamic programming technique can straightforwardly be extended to
		take into account the weights of the homomorphisms.
		(We omit a detailed description of the extended algorithm.)

		This concludes our proof that $k$-{\sc PO-Probability} can be solved in
		fixed-parameter tractable time for the lottery model.
		\end{proof}


    %
    %
    %
    %
    %
    %
    %
    %
    %
    %
    %

		\section{Conclusions}
		
		Computing Pareto optimal outcomes is an active line of research in economics and computer science. In this paper, we examined the problem for an assignment setting where the preferences of the agents are uncertain. Our central technical results are computational hardness results. 
		We see that as we move from deterministic preferences to uncertain preferences, the complexity of computing Pareto optimal outcomes jumps significantly. 
		The computational hardness results carry over to more complex models in which there may be more items than agents, agents may have capacities, and items may have copies. 
		For future work, we are also starting to consider other uncertainty models~\citep{ABG+16a}. 
		If we consider the compact indifference model~\citep{ABG+16a} which is an independent uncertainty model, then the results in Section~\ref{sec:uncertain} apply to it. 
		If we allow for intransitive preferences, even a possibly Pareto optimal assignment may not exist and the problem of checking whether a possible Pareto optimal assignment exists becomes interesting. 
An orthogonal but equally interesting direction will be to consider other fairness, stability, or efficiency desiderata~\citep{Aziz16a}.

 \end{document}